\newcommand{\citet}[1]{\citeauthor{#1}~\shortcite{#1}}
\begin{document}
\title{Contested Route Planning}
%
%


\author{
Jakub \v{C}ern\'{y} 
\and
Garud Iyengar
\and
Christian Kroer
}
\authorrunning{J. \v{C}ern\'{y} et al.}
\institute{Columbia University, New York, NY 10027, USA\\
\email{\{jakub.cerny, christian.kroer\}@columbia.edu}, \email{garud@ieor.columbia.edu}}

\maketitle              
\begin{abstract}
We consider the problem of routing for logistics purposes, in a contested environment where an adversary attempts to disrupt the vehicle along the chosen route. We construct a game-theoretic model that captures the problem of optimal routing in such an environment. While basic robust deterministic routing plans are already challenging to devise, they tend to be predictable, which can limit their effectiveness. By introducing calculated randomness via modeling the route planning process as a two-player zero-sum game, we compute immediately deployable plans that are diversified and harder to anticipate. Although solving the game exactly is intractable in theory, our use of the double-oracle framework enables us to achieve computation times on the order of seconds, making the approach operationally viable. In particular, the framework is modular enough to accommodate specialized routing algorithms as oracles. We evaluate our method on real-world scenarios, showing that it scales effectively to realistic problem sizes and significantly benefits from explicitly modeling the adversary's capabilities, as demonstrated through ablation studies and comparisons with baseline approaches.
\keywords{Vehicle routing  \and Game theory \and Equilibrium computation.}
\end{abstract}

\section{Introduction}

Pathfinding is the process of determining the most efficient route between two points, a problem with historical roots in graph theory and operations research, dating back to the mid-20th century. Early algorithms like Dijkstra’s~\cite{dijkstra2022note} and A$^*$~\cite{hart1968formal} laid the foundation for modern approaches. Today, route planning is crucial in applications like GPS navigation, robotics, logistics, and video games. In services like Google Maps, algorithms account for factors such as road distances, speed limits, real-time traffic, road closures, and user preferences to compute the fastest route. These systems often use a combination of heuristics, live data, and precomputed route segments to ensure both speed and accuracy in routing.

While traditional route planning methods perform well in environments where risks are passive or stochastic, they may fall short when navigating through territories with active adversaries. In this paper, we explore scenarios where route planning must account for the presence of an opponent intent on disrupting movement -- whether for transferring people or goods, or as part of larger logistical operations. Attacks on supply lines have been widely documented in real military contexts and are often considered more strategically effective than direct confrontations~\cite{mcmahon2017maritime}. In a broader geopolitical context, major powers like the US, China, and the EU actively seek to diversify their supply chains to reduce vulnerability to potential conflict-related disruptions~\cite{cowen2010geography}. Beyond conflict preparedness, incorporating adversarial considerations into route planning serves as a form of stress testing -- highlighting fragilities that might otherwise remain hidden. Historical cases such as the repeated Allied bombing of the bridge on the River Kwai during World War II~\cite{warfare_kwai}, or more recently, the 2022 Ukrainian strikes on the Antonivskyi Bridge near Kherson~\cite{guardian_kherson} underscore how the loss of a single critical link can paralyze military logistics and alter the strategic landscape. Similarly, the 2021 EverGreen incident in the Suez Canal is a stark reminder of the vulnerabilities posed by overreliance on single points of failure \cite{lee2021suez}.

The introduction of an adversary naturally calls for a game-theoretic framework to model the strategic interaction between the route planner and the interdictor. Without explicitly modeling this interaction, route planning solutions risk being highly exploitable. First, they fail to anticipate the adversary's adaptive behavior -- how an informed opponent might change their interdiction strategy in response to the selected route. Second, deterministic route choices become predictable and easy to target when used repeatedly, even if they are adversary-aware. To address these issues, we introduce \textit{contested route planning} as a game-theoretic model involving two players, customarily referred to as \textit{Blue} and \textit{Red}. Blue represents the route planner seeking to optimize their route choices, while Red is the adversary attempting to disrupt or degrade Blue’s performance. This interaction is framed as a two-player zero-sum one-shot game, where strategies are chosen simultaneously. The solution is characterized by the Nash equilibrium, typically involving randomized strategies. Such game-theoretic solutions improve over non-strategic models by making the chosen routes both more robust to dynamic adversaries and less predictable over time.

While the presence of adversaries in route planning is not new~\cite{smith2020survey}, our model departs from prior work in several key ways. First, it does not assume a fixed behavior or limited rationality for the adversary. Instead, Red is modeled as a fully adaptive opponent capable of selecting responses that most effectively degrade Blue’s performance. Second, Red is modeled as making discrete, localized interruptions, such as destruction of roads, bridges, or convoys, rather than continuously distributing disruptions across the entire network. These requirements and the model itself were developed in cooperation with the U.S. Office of Naval Research, with the primary goal of providing a practically useful tool. In particular, our models and algorithms are designed to run efficiently on low-power devices (including handhelds), without relying on complex optimization software, and at scales relevant to real-world operations. To that end, the formulation is intentionally flexible, allowing integration with existing planners that incorporate vehicle-specific routing constraints.

A primary intended application of our model is tactical-level contested route (re)planning, supporting decision-making for individual vehicles (in the military context, this means operating at the battalion level and below), where human planners may struggle to balance concentration of movement with operational effectiveness. Our model is tailored for dynamic environments where routes must be rapidly reevaluated in response to evolving conditions on the ground, such as the sudden loss of a bridge or tunnel, or the emergence of hostile presence that increases the risk associated with certain paths. It also addresses the typical pitfalls of manual replanning, including high communication overhead for coordinating alternate courses of action, individual cognitive biases around risk, and difficulties in synchronizing updates across units.

Beyond lower-level planning, our model is equally applicable to higher-level logistical coordination, where multiple vehicles must execute deliveries across dispersed locations. Existing models that attempt to account for adversarial disruption at this scale often fail to capture realistic complexity or cannot scale when tasked with modeling both strategic attacks (e.g., on depots or large convoys) and fine-grained route choices~\cite{vcerny2024contested}. A practical solution to this problem is decomposition: the higher-level delivery plan is computed first, possibly using a coarser adversarial abstraction, and then the resulting individual routing tasks are delegated to our model. This layered approach allows for scalability while preserving adversarial robustness at the operational level.

The computation of a Nash equilibrium presents significant challenges. Both the number of possible Blue routing plans and the number of Red interdiction plans is exponential. Thus, explicitly specifying and solving the problem as a zero-sum bimatrix game is not practical. Our main contributions are as follows:

\begin{itemize}
    \item We formally propose the framework of contested route planning games, a novel variant of pathfinding that accounts for Red's capabilities. Our min-max formulation \textit{explicitly} models Red actively seeking to thwart Blue, via localized attacks. We show that an optimal strategy exists for both players via von Neumann's minimax theorem. 
    
    \item We prove that computing the equilibrium as well as Red's best response, are \NP{}-hard problems. Nonetheless, the best responses of Blue (respectively, Red) to a fixed randomized Red (resp., Blue) strategy can be formulated as standard computer science problems (shortest path and knapsack problems) enabling the use of established algorithms.
    
    \item We propose solving the game via a double oracle method, utilizing the best-response formulations. We demonstrate scalability via experiments. 

    \item We conduct experiments on scenarios inspired by real-world conditions, observing that (i) realistically sized problem instances can be solved within seconds to a few minutes, (ii) both na\"ive and robust deterministic strategies for Blue are highly exploitable, and (iii) underestimating Red’s capabilities can in some cases result in a significant loss of performance.
\end{itemize}

\section{Related Work}

We review related work in route planning under uncertainty or adversarial conditions, while referring readers to~\cite{goldberg2005computing, cherkassky1996shortest, bast2016route, delling2009engineering} for classical shortest-path algorithms, and to~\cite{pfohl1998logistics, ghiani2004introduction, daganzo2005logistics} for traditional logistics models.

One closely related area is the class of vehicle routing problems (VRPs), which focus on optimizing routes for delivery or service vehicles. While the literature on robust VRPs is extensive, most models incorporate uncertainty in the form of fluctuating demands, travel times, time windows, or customer availability, rather than explicitly modeling adversarial behavior~\cite{ordonez2010robust, agra2013robust, sungur2008robust, lee2012robust}. When adversarial elements are introduced, they are typically captured via non-adaptive probabilistic models~\cite{blom2020inventory, alotaibi2018unmanned}. A particularly relevant subset of this literature involves bi-level formulations in ambush avoidance and hazardous materials (hazmat) routing~\cite{ruckle1976ambushing, salani2010ambush, boidot2014complete}. However, these works generally assume simpler or more constrained models of the adversary (Red), in contrast to the combinatorial action space we consider. Related approaches in routing interdiction then adopt deterministic formulations of Blue's decision space~\cite{church2004identifying, sadati2020r, bidgoli2018arc}.

A further line of research is network interdiction, which investigates strategies for removing or degrading arcs in a network to hinder an adversary’s traversal. Originally studied in~\cite{wollmer1964removing}, this field has expanded to include objectives such as increasing an evader’s shortest path to an exit or reducing the maximum flow between key nodes~\cite{washburn1995two, smith2020survey, smith2008algorithms, israeli2002shortest}. A comprehensive survey is provided in~\cite{smith2020survey}. Of particular interest to our work are simultaneous shortest path interdiction models that frame the interaction as a game and seek equilibria over competing strategies~\cite{washburn1995two, goldberg2017non}. These models, however, differ in several ways, often assuming multiple sources and destinations or different models of Red's capabilities.

Pursuit–evasion games form another related well-established body of literature, modeling interactions where a pursuer aims to detect or capture a moving evader, often under time constraints. These come under names such as Cops and Robbers or Games of Pursuit, across both continuous and discrete domains with varying assumptions on observability and movement~\cite{isaacs1999differential, friedman2013differential, weintraub2020introduction, bopardikar2008discrete, bonato2011game, parsons2006pursuit}. While mathematically rich, these models are generally computationally intractable outside of small or highly structured environments. Conceptually, they differ from our setting because the evader usually selects among multiple exits while the pursuer physically tracks the evader, rather than pre-positioning interdictions.

Security games represent another relevant framework, often involving mobile or stationary defenders patrolling targets or areas subject to attack. These models have led to real-world applications~\cite{jain2013security, pita2008armor, shieh2012protect, an2017stackelberg}, and a range of algorithmic advances have been made to manage large but structured action spaces, including path patrol strategies~\cite{korzhyk2010complexity}. However, in most security game formulations, the moving player's goal is to protect a region or set of targets over time, rather than to traverse a network from a source to a destination~\cite{cerny2024layered, jain2011double, zhang2017optimal}.

Finally, contested logistics, a domain especially relevant to military operations, considers the disruption of supply chains by adversaries. Many of these models simplify Red’s behavior, either assuming fixed stochastic behavior or limited observability~\cite{barahona2007inventory, salmeron2009stochastic, hill2009overview}. While some formulations use bi-level optimization, the resulting strategies are typically deterministic or may not scale to the size of settings we aim to study~\cite{bell2015military, vcerny2024contested}.

\section{Model}

The contested route planning game is played on a finite directed physical graph $\mathsf{G} = (\mathsf{V}, \mathsf{E})$, where the nodes $\mathsf{V}$ represent geographic locations, such as cities, towns, provinces, and other relevant points, and the edges $\mathsf{E}$ represent the traversable connections between them. These connections may include roads, railways, waterways, or other transport corridors, depending on the context and mode of travel. Each edge is associated with three key values: an interdiction cost for Red, representing the resources required to disrupt the edge; an interdiction penalty for Blue, quantifying the consequence of traversing an interdicted edge (often related to metrics such as the \enquote{probability to kill} used in military contexts); and a time or distance cost reflecting the effort required for Blue to traverse the edge under normal, non-interdicted conditions. In addition, $\mathsf{V}$ contains two specific locations, a start node and a target release node, that Blue is tasked with navigating between.

The game is played simultaneously. Red selects a subset of edges to interdict, constrained by a fixed budget. Blue selects a path from the start to the release node and incurs penalties associated with all interdicted edges traversed. Blue’s objective is to minimize the total accumulated penalty en route, while Red aims to maximize it. The game is zero-sum, i.e., Red's goal is to maximize the penalty.

\subsection{Blue's strategy space}

Let us denote Blue’s start node in the physical graph $\mathsf{G}$ as $v_s \in \mathsf{V}$ and the release point as $v_r \in \mathsf{V}$. Blue’s action space consists of all feasible paths in $\mathsf{G}$ connecting $v_s$ to $v_r$. When $\mathsf{G}$ is acyclic\footnote{This may happen, for example, if $\mathsf{G}$ represents a time-expanded graph.}, the path selection problem can be encoded as the solution to the following feasibility MILP:
\begin{equation}
    \label{blue-flows}
    \tag{\emph{F}}
    \begin{aligned}
        \sum_{e\in \pedges^{-}(v_s)}f(e) &= \sum_{e\in \pedges^{+}(v_r)}f(e) = 1 \\
        \sum_{e\in \pedges^-(v)}f(e) &= \sum_{e\in \pedges^+(v)}f(e)&&\forall v\in\pvertices\backslash\{v_s,v_r\}\\
        f(e)&\in\{0,1\}.&&\forall e\in\pedges 
    \end{aligned}
\end{equation}
We denote the space of all feasible solutions as \ref{blue-flows} and a single solution as $f\in\ref{blue-flows}$. In cases where the graph contains cycles, additional care is required to eliminate closed loops that are not part of the valid path from $v_s$ to $v_r$, but are nonetheless included in the feasible solution space of the formulation. One common way, which we also employ, is to introduce an objective that penalizes unnecessary movement, thereby discouraging such extra cycles. In the remainder of the text we therefore assume that $f$ represents a single valid path. Note that the formulation can also be extended to incorporate further constraints, such as prohibiting certain edge combinations, to account for operational limitations like turning radius or terrain restrictions for different types of vehicles.

\subsection{Red's strategy space}

The strategy space of Red is simpler than that of Blue. The model follows the structure of classic network interdiction problems, where Red selects a subset of edges in the physical graph $\pgraph$ to interdict, subject to a budget $B \geq 0$ and an interdiction cost function $C: \pedges \to \realp$. We assume that once an edge is interdicted, it remains disrupted for the entire duration of the game. Red’s set of feasible actions is described by the solutions to the following MILP:
\begin{equation}
    \label{red-interdiction}
    \tag{\emph{Y}}
    \begin{aligned}
    B &\geq \sum_{e\in \pedges}C(e)y(e),\qquad 
    y(e)\in\{0,1\}.&&\forall e\in\pedges
    \end{aligned}
\end{equation}
As with Blue’s action space, we refer to Red’s action space as the set of feasible solutions to \ref{red-interdiction}, and denote a specific interdiction plan as $y \in \ref{red-interdiction}$. Again, this formulation provides a compact and easily extendable way to represent Red’s choices within the game.

\subsection{Utility}

We define two edge-specific penalty functions: a time or distance penalty $T: \pedges \to \realp$, representing the cost of traversing an edge under normal conditions, and an interdiction penalty $P: \pedges \to \realp$, capturing the additional penalty incurred by Blue when traversing an edge interdicted by Red. These penalties can be rescaled arbitrarily to reflect their relative importance or trade-off ratio in the utility calculation. Given $f\in\ref{blue-flows}$ representing Blue’s selected path and $y\in\ref{red-interdiction}$ indicating Red’s interdiction plan, the utility for Blue is defined as:
\begin{equation}
\label{utility}
\tag{\emph{u}}
u(f,y) = \sum_{e \in \pedges} T(e) f(e) + P(e) f(e) y(e).
\end{equation}

This utility reflects the total cost incurred by Blue: the sum of the baseline time or distance penalties, plus any additional penalties from traversing interdicted edges. Blue aims to minimize this total cost, while Red, since this is a zero-sum game, seeks to maximize it.

Importantly, this utility function also serves to implicitly eliminate cycles in the MILP formulation \ref{blue-flows} of Blue’s path space. Since each edge carries a nonzero traversal cost, solutions that include extraneous cycles (which would only add cost) are automatically disfavored, ensuring that the selected solution (a single path only) is both feasible and efficient.

\section{Solution}

Our goal is to compute a Nash equilibrium between Blue’s routing strategies and Red’s interdiction strategies. Let $\Delta_b$ and $\Delta_r$ denote the probability simplices over the feasible action sets defined by \ref{blue-flows} and \ref{red-interdiction}, respectively. A mixed strategy $x_i \in \Delta_i$ for player $i \in {b, r}$ assigns a probability $x_i(p_i)$ to each pure strategy $p_i$ in their action space. The equilibrium problem can be formulated as the following bilinear saddle-point optimization:
\begin{align*}
    &\min_{x_b \in \Delta_b} \max_{x_r \in \Delta_r} \mathbb{E}_{f \sim x_b, y \sim x_r} \left[ u (f, y) \right] \label{eq:general-min-max}
    = &\min_{x_b \in \Delta_b} \max_{x_r \in \Delta_r} \sum_{f\in\ref{blue-flows}} \sum_{y\in\ref{red-interdiction}} x_b(f) \cdot x_r(y) \cdot u(f, y). 
\end{align*}
Because the physical graph $\pgraph$ is finite, both Blue and Red have finite sets of pure strategies. Their respective mixed strategy spaces are convex and compact, and the expected utility function is bilinear -- convex in Blue’s strategy and concave in Red’s. These properties ensure, via the minimax theorem~\cite{v1928theorie}, that a Nash equilibrium exists and the game has a well-defined value. Nonetheless, solving for this equilibrium is computationally demanding due to the exponential number of pure strategies in each player’s action space.

\begin{figure}[t]
    \centering

\begin{tikzpicture}[minimum size=10mm, node distance=1.8cm, >=stealth, thick]
    \node[circle, draw, fill=white] (vs) {$v_s$};
    \node[right=0.7cm of vs] (dots0) {$\dots$};
    \node[circle, draw, fill=white, right=0.7cm of dots0] (vi) {$v_{i-1}$};
    \node[right=0.7cm of vi] (dots1) {$\dots$};
    \node[circle, draw, fill=white, right=0.7cm of dots1] (vip1) {$v_{i}$};
    \node[right=0.7cm of vip1] (dots2) {$\dots$};
    \node[circle, draw, fill=white, right=0.7cm of dots2] (vn) {$v_r$};

    \draw[->] (vs) -- (dots0);
    \draw[->] (dots0) -- (vi);
    \draw[-] (vi) -- (dots1);
    \draw[->] (dots1) -- (vip1);
    \draw[->] (vip1) -- (dots2);
    \draw[->] (dots2) -- (vn);

    \draw[->] (vi) -- (vip1) node[midway, above] {$C(e_i) = w_i$} node[midway, below] {$P(e_i) = p_i$};

\end{tikzpicture}
    \caption{The line graph described Proposition~\ref{prop:nashcomp}, serving as physical graph for a contested route planning game for the \textsc{Knapsack} problem we reduce from. The interdiction costs $C$ for Red and penalties $P$ for Blue are set to match the \textsc{Knapsack} items' weights $w$ and values $p$.}
    \label{fig:knapsack}
\end{figure}
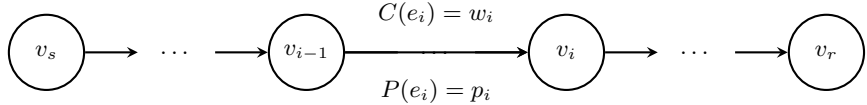

\begin{proposition}\label{prop:nashcomp}
    It is \NP-hard in terms of $|\pgraph|$ to find a Nash equilibrium for a contested route planning game with action spaces \ref{blue-flows} and \ref{red-interdiction}, and utility given in Formulation~\ref{utility}.
\end{proposition}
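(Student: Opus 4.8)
The plan is to give a polynomial-time many-one reduction from the \NP-complete 0/1 \textsc{Knapsack} decision problem to the problem of computing the value of a contested route planning game, exactly along the lines suggested by the line graph in Figure~\ref{fig:knapsack}. Given a \textsc{Knapsack} instance with $n$ items of weights $w_1,\dots,w_n$, values $p_1,\dots,p_n$, capacity $B$, and target value $V$, I would build the physical graph $\pgraph$ as a directed path $v_s = v_0 \to v_1 \to \dots \to v_n = v_r$, placing one edge $e_i$ from $v_{i-1}$ to $v_i$ for each item $i$. I would set the interdiction cost $C(e_i) = w_i$, the interdiction penalty $P(e_i) = p_i$, the baseline cost $T(e_i) = 0$, and keep Red's budget equal to $B$. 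This construction has size $|\pgraph| = n+1$, so it is clearly polynomial in the input.

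The crucial structural observation is that on this line graph Blue's action space \ref{blue-flows} is a singleton: the only feasible $v_s$--$v_r$ path traverses every edge, so $f(e_i) = 1$ for all $i$ in the unique $f \in \ref{blue-flows}$. Substituting this into the utility \ref{utility} collapses it to $u(f,y) = \sum_i T(e_i) + \sum_i P(e_i)\, y(e_i) = \sum_i p_i\, y(e_i)$, while Red's feasible set \ref{red-interdiction} is precisely the family of item subsets of total weight at most $B$. Hence Red's optimization against the forced Blue path is \emph{identical} to maximizing value subject to the capacity constraint, i.e.\ to solving the \textsc{Knapsack} instance. Because Blue has a single pure strategy, every Nash equilibrium plays that strategy with probability one and supports Red only on optimal interdictions, so the game value equals the optimal \textsc{Knapsack} objective $\mathrm{OPT}$.

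I would then close the reduction by noting that any algorithm returning a Nash equilibrium lets us evaluate the expected utility of the recovered profile, yielding $\mathrm{OPT}$ in polynomial time; comparing $\mathrm{OPT}$ against the target $V$ decides the \textsc{Knapsack} instance. Since \textsc{Knapsack} is \NP-hard, computing the equilibrium is \NP-hard in terms of $|\pgraph|$ as claimed.

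The reduction itself is short, so the main point requiring care is conceptual rather than computational: one must argue that \emph{finding an equilibrium} (its strategies) is at least as hard as computing the game value. This holds here because the value is read off by evaluating \ref{utility} at the returned profile, and because the degeneracy of Blue's strategy space -- a single feasible action -- forces the equilibrium value to coincide with the \textsc{Knapsack} optimum regardless of which equilibrium is produced. A secondary detail is the convention on $T$ and $\realp$: if edge costs must be strictly positive, I would instead set $T(e_i) = \varepsilon$, shifting the value by the known constant $n\varepsilon$ without affecting Red's maximizer.
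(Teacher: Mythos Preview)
Your proposal is correct and follows essentially the same reduction as the paper: a \textsc{Knapsack} instance is encoded as a line graph where Blue has a unique path, so Red's best response (and hence the equilibrium value) coincides with the knapsack optimum. The only cosmetic difference is that the paper sets $T(e_i)=1$ rather than $0$, which merely adds the constant $n$ to the value; your remark about using $T(e_i)=\varepsilon$ if strict positivity is required already anticipates this.
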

\begin{proof}
We prove \NP-hardness by reducing from the classical \textsc{Knapsack} problem~\cite{garey2002computers}. Consider an instance of \textsc{Knapsack} defined by a set of $n$ items indexed by $i \in [n]$, each with a nonnegative value $p_i$ and weight $w_i$, and a total weight budget $W$. The goal is to find a subset of items $S \subseteq [n]$ maximizing $\sum_{i \in S} p_i$ subject to the constraint $\sum_{i \in S} w_i \leq W$. We construct a contested route planning game on a line graph $\pgraph = (\pvertices, \pedges)$, where the nodes are $\pvertices = \{v_0, v_1, \ldots, v_n\}$ and edges $\pedges = \{e_i = (v_{i-1}, v_i) \mid i \in [n]\}$. The start and release points are $v_s = v_0$ and $v_r = v_n$ respectively. We define the interdiction cost function $C$ by $C(e_i) = w_i$, the interdiction penalty function $P$ by $P(e_i) = p_i$, and set the edge traversing penalty $T(e_i) = 1$ for all edges. The physical graph set like this is depicted in Figure~\ref{fig:knapsack}. Red’s interdiction budget is $B = W$. Since the graph contains a single path from $v_s$ to $v_r$, Blue’s action space consists only of that single path, making Blue’s choice trivial. Thus, Blue’s utility function reduces to $u(f,y) = \sum_{i=1}^n T(e_i) + \sum_{i=1}^n P(e_i) y_i = n + \sum_{i=1}^n p_i y_i$, where $y_i \in \{0,1\}$ denotes whether edge $e_i$ is interdicted. Red’s strategy is constrained by the budget $\sum_{i=1}^n w_i y_i \leq W$. Maximizing Blue’s penalty thus corresponds exactly to solving the \textsc{Knapsack} problem, since Red must select edges to maximize $\sum p_i y_i$ subject to the weight constraint. Therefore, finding a Nash equilibrium is \NP-hard.
\end{proof}

Blue’s and Red’s (pure) best responses to fixed strategies $x_r$ and $x_b$ are
\begin{align*}
f^{\text{BR}} = \arg\min_{f \in \ref{blue-flows}} u(f, x_r)
\quad \text{and} \quad
y^{\text{BR}} = \arg\max_{y \in \ref{red-interdiction}} u(x_b, y),\quad\text{where}\\
u(x_b, y) = \mathbb{E}_{f \sim x_b} [u(f, y)] \quad \text{and} \quad u(f, x_r) = \mathbb{E}_{y \sim x_r} [u(f, y)].
\end{align*}
While best response computation is closely related to finding a Nash equilibrium, these problems are generally distinct~\cite{xu2016mysteries}. However, in contested route planning games, computing Red’s best response is intractable. This follows immediately from the NP-hardness proof of finding a Nash equilibrium in Proposition~\ref{prop:nashcomp}, as it directly corresponds to solving the knapsack problem therein. Formally:

\begin{corollary}
    Let $\tilde{\ref{blue-flows}} \subseteq \ref{blue-flows}$ be of size $k$ (possibly smaller than $|\ref{blue-flows}|$) and $\widetilde{x}_b$ be a distribution with support $\tilde{\ref{blue-flows}}$.
    Finding Red's best response against $\widetilde{x}_b$ in a contested route planning game with utility~\ref{utility} is \NP-hard in terms of $|\pgraph|$ and $k$.
\end{corollary}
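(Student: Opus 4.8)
The plan is to reduce from \textsc{Knapsack} just as in Proposition~\ref{prop:nashcomp}, but now showing that the reduction survives the extra generality introduced by the corollary, namely that Red is responding to an \emph{arbitrary} mixed strategy $\widetilde{x}_b$ supported on a set $\tilde{\ref{blue-flows}}$ of size $k$. The cleanest route is to observe that the single-path line graph of Figure~\ref{fig:knapsack} forces $k = 1$: there is only one feasible path, so the only distribution $\widetilde{x}_b$ with support in $\ref{blue-flows}$ is the degenerate one placing probability $1$ on that path. First I would instantiate the corollary's setting on exactly this instance, take $\tilde{\ref{blue-flows}}$ to be the singleton containing the unique $v_s$--$v_r$ path, and let $\widetilde{x}_b$ be the point mass on it.

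Next I would expand $u(\widetilde{x}_b, y)$ under this point mass. Since $\widetilde{x}_b$ is degenerate, the expectation collapses, $u(\widetilde{x}_b, y) = u(f, y)$ for the unique path $f$, and the computation from Proposition~\ref{prop:nashcomp} carries over verbatim: $u(\widetilde{x}_b, y) = n + \sum_{i=1}^n p_i y_i$ subject to $\sum_{i=1}^n w_i y_i \le W$. Thus $y^{\text{BR}} = \arg\max_{y \in \ref{red-interdiction}} u(\widetilde{x}_b, y)$ is precisely an optimal solution to the \textsc{Knapsack} instance, and a best-response oracle would solve \textsc{Knapsack}. Since $|\pgraph| = n+1$ is polynomial in the input and $k = 1$, this establishes \NP-hardness in terms of both $|\pgraph|$ and $k$.

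One subtlety worth flagging is that the statement allows $k$ to be larger than one, and a careful reader might worry that hardness is only being demonstrated at $k = 1$. This is not an obstacle but should be addressed: hardness at $k = 1$ already implies hardness for the general problem, because the general best-response problem contains the $k = 1$ case as a special instance, so no separate argument for larger supports is needed. If one nonetheless wanted an explicit hard instance for a prescribed $k$, I would note that the same line-graph gadget can be replicated into $k$ parallel disjoint copies (or padded with $k-1$ dummy paths of uniformly high baseline penalty that never interact with Red's budget), distributing $\widetilde{x}_b$ across them without changing which interdiction maximizes the expected penalty on the \textsc{Knapsack} copy.

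The main obstacle, to the extent there is one, is purely expository rather than mathematical: the result is essentially an immediate corollary of the proof already given, so the challenge is to state precisely which objects in Proposition~\ref{prop:nashcomp} play the roles of $\tilde{\ref{blue-flows}}$, $\widetilde{x}_b$, and $k$ in the corollary's more general phrasing, and to verify that the degenerate-distribution expectation reduces $u(\widetilde{x}_b, y)$ to exactly the \textsc{Knapsack} objective. No genuinely new combinatorial construction is required; the work is in making the correspondence airtight and confirming the polynomial bound in $|\pgraph|$ and $k$.
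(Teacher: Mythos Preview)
Your proposal is correct and mirrors the paper's own reasoning: the paper treats this as an immediate corollary of Proposition~\ref{prop:nashcomp}, noting that on the line-graph instance Blue has a unique path, so Red's best response against $\widetilde{x}_b$ (with $k=1$) is exactly the \textsc{Knapsack} optimization already exhibited there. Your additional remarks on handling prescribed $k>1$ and on the degenerate expectation are fine clarifications but go beyond what the paper itself provides.
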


Nevertheless, using the action space formulation~\ref{red-interdiction}, Red’s best response can be expressed as the following polynomial-sized MILP:
\begin{equation}
\tag{Red-BR}
        \max_{y\in\ref{red-interdiction}} \sum_{f\in\tilde{\ref{blue-flows}}}\sum_{e \in \pedges} \widetilde{x}_b(f)\left(T(e) f(e) + P(e) f(e) y(e)\right),
\end{equation}
which can be solved using modern MILP solvers like Gurobi or CPLEX. However, even when Blue’s strategy $\widetilde{x}_b$ is randomized, computing the best response remains a \textsc{Knapsack} instance in disguise. Each edge $e \in \pedges$ can be thought of as a single item; the interdiction cost $C(e)$ becomes the item's weight, and the effective item value $p(e)$ is given by
\begin{equation*}
    p(e) = \sum_{f\in\tilde{\ref{blue-flows}}} \widetilde{x}_b(f)\left(T(e) f(e) + P(e) f(e)\right).
\end{equation*}
The objective is then to choose a subset of items (edges) whose total weight does not exceed Red’s budget $B$, and whose total value is maximized -- precisely as in the standard \textsc{Knapsack} structure. This connection allows us to leverage specialized algorithms for the \textsc{Knapsack} problem~\cite{pisinger1997minimal, pisinger1998fast, martello1990knapsack}, including highly optimized implementations such as KPYM~\cite{kpym}, which can outperform general-purpose MILP solvers. Notably, if either the interdiction costs $C$ or interdiction penalties $P$ are integral, exact solutions can be found in pseudo-polynomial time using dynamic programming~\cite{martello1990knapsack}. Furthermore, relaxing Red’s action space to allow partial interdiction of edges renders the problem immediately tractable in polynomial time~\cite{martello1990knapsack}. In contrast, computing Blue’s best response is tractable right from the start and does not pose the same level of complexity.

\begin{proposition}
    Let $\tilde{\ref{red-interdiction}} \subseteq \ref{red-interdiction}$ be of size $k$ (possibly smaller than $|\ref{red-interdiction}|$) and $\widetilde{x}_r$ be a distribution with support $\tilde{\ref{red-interdiction}}$.
    Finding Blue's best response against $\widetilde{x}_r$ in a contested route planning game with utility~\ref{utility} is polynomial in terms of $|\pgraph|$ and $k$.
\end{proposition}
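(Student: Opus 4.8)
The plan is to exploit the bilinearity of the utility~\ref{utility} and reduce Blue's best-response computation to a single shortest-path computation on a graph with modified, nonnegative edge costs. The key observation is that the expected utility against $\widetilde{x}_r$ depends on Red's mixed strategy only through the per-edge \emph{interdiction marginals}, so the support size $k$ enters only through a one-time preprocessing step and not into the optimization itself.

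First I would expand the expectation. Writing $u(f,\widetilde{x}_r)=\sum_{y\in\tilde{\ref{red-interdiction}}}\widetilde{x}_r(y)\,u(f,y)$ and substituting~\ref{utility}, linearity lets me pull the sum over $y$ inside and collect terms edge by edge:
\[
u(f,\widetilde{x}_r)=\sum_{e\in\pedges}f(e)\Bigl(T(e)+P(e)\,q(e)\Bigr),\qquad q(e):=\sum_{y\in\tilde{\ref{red-interdiction}}}\widetilde{x}_r(y)\,y(e),
\]
where $q(e)\in[0,1]$ is the marginal probability that Red interdicts edge $e$ (it lies in $[0,1]$ because $\widetilde{x}_r$ is a distribution and each $y(e)\in\{0,1\}$). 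The baseline term $T(e)f(e)$ is unaffected by Red, while the interaction term $P(e)f(e)y(e)$ contributes $P(e)f(e)q(e)$ after averaging.

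Next, I would define the combined edge weight $w(e):=T(e)+P(e)\,q(e)\ge 0$, where nonnegativity follows from $T,P\ge 0$ and $q(e)\in[0,1]$. Blue's best response is then exactly $\argmin_{f\in\ref{blue-flows}}\sum_{e\in\pedges}w(e)f(e)$, i.e.\ a minimum-cost $v_s$--$v_r$ path under $w$. Since all weights are nonnegative, this is a standard shortest-path instance solvable by Dijkstra's algorithm; moreover, nonnegativity guarantees that no optimal solution benefits from including a superfluous cycle, so the argument applies verbatim in the cyclic case as well, without invoking the cycle-penalization device discussed for~\ref{blue-flows}.

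Finally I would bound the running time: computing all marginals $q(e)$ takes $O(k\,|\pedges|)$ time, since each of the $k$ support strategies sets at most $|\pedges|$ entries; forming $w$ is $O(|\pedges|)$; and Dijkstra runs in $O(|\pedges|+|\pvertices|\log|\pvertices|)$. The total is polynomial in $|\pgraph|$ and $k$, as claimed. I do not anticipate a genuine obstacle here: the only conceptual step is recognizing that averaging collapses Red's (possibly intricate) mixed strategy into edge marginals, after which the problem becomes a textbook shortest path. This stands in sharp contrast to Red's best response, where the budget constraint prevents an analogous per-edge decomposition and the \textsc{Knapsack} structure persists.
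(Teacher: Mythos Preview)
Your proof is correct. The reduction to a single nonnegative-weight shortest-path computation via the interdiction marginals $q(e)$ is sound, and the running-time accounting is accurate.

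The paper's formal proof takes a different route: it writes Blue's best response as a polynomial-sized MILP over the flow polytope~\ref{blue-flows}, observes that the constraint matrix is totally unimodular with integral right-hand side, and concludes that the LP relaxation already yields an integral optimum. The shortest-path interpretation you give---defining $\ell(e)=\sum_{y}\widetilde{x}_r(y)(T(e)+P(e)y(e))$ and running Dijkstra or $A^*$---appears in the paper as well, but only as a remark \emph{after} the proof, not as the proof itself. Your argument is more elementary (it avoids invoking total unimodularity) and yields an explicit, sharper running-time bound; the paper's TU argument, on the other hand, is more robust to extensions of~\ref{blue-flows} by additional flow-type constraints, a point the paper emphasizes in its closing remark about preserving polynomial solvability under structural extensions.
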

\begin{proof}
Using the action space formulation \ref{blue-flows}, the problem of computing Blue's best response reduces to solving the following polynomial-sized MILP:
\begin{equation}
\tag{Blue-BR}
        \min_{f\in\ref{blue-flows}} \sum_{y\in\tilde{\ref{red-interdiction}}}\sum_{e \in \pedges} \widetilde{x}_r(y)\left(T(e) f(e) + P(e) f(e) y(e)\right),
\end{equation}
where the decision variable $f$ corresponds to a unit $v_s$-$v_r$ flow over the graph $\pgraph$. The constraint matrix of this flow polytope is totally unimodular, and the right-hand side is integral; thus, the feasible region forms an integral polyhedron. 
Consequently, solving the LP relaxation yields a solution to MILP\cite{schrijver1998theory}.
\end{proof}

Furthermore, observe that the objective function is linear in $f$ and decomposes additively across edges. For a fixed distribution $\widetilde{x}_r$, we may define a deterministic edge length function:
\begin{equation*}
    \ell(e) = \sum_{y \in \tilde{\ref{red-interdiction}}} \widetilde{x}_r(y) \left( T(e) + P(e) y(e) \right).
\end{equation*}
Under this transformation, the best response for Blue reduces to finding a path from $v_s$ to $v_r$ of minimum total length with respect to the modified edge lengths $\ell(e)$. Since all edge lengths remain non-negative, standard shortest path algorithms such as Dijkstra’s algorithm apply directly. Moreover, if the original penalty function $T$ supports an admissible heuristic (e.g., a Euclidean straight-line distance), this admissibility is preserved in $\ell$, enabling the use of informed search techniques such as $A^*$~\cite{hart1968formal}.

Finally, note that this polynomial solvability is robust to a range of possible extensions. Even if additional movement constraints are imposed on Blue’s path, for example, forbidden edge sequences, vehicle dynamics, or timing restrictions, as long as the feasible set remains representable as a network flow with a totally unimodular constraint matrix, the best response remains polynomial-time solvable in $k$ and $|\pgraph|$.

\subsection{Approximating the Equilibrium Using Strategy Generation}

Although the strategy spaces of both players are exponential in size, route planning and patrolling games on real-world graphs often exhibit Nash equilibria supported on relatively few pure strategies~\cite{cerny2024layered,vcerny2024contested,krever2025guard}. This observation motivates the use of the double oracle (DO) algorithm, a widely used iterative method for solving large-scale zero-sum games with efficient best-response oracles.

The DO algorithm incrementally builds a restricted subgame, defined by subsets of pure strategies for each player, with the aim of converging to an equilibrium of the full game. In the context of contested route planning, these pure strategies correspond to Blue’s feasible routing plans $f \in \ref{blue-flows}$ and Red’s interdiction plans $y \in \ref{red-interdiction}$. At each iteration, the algorithm maintains subsets $\widetilde{\ref{blue-flows}} \subseteq \ref{blue-flows}$ and $\widetilde{\ref{red-interdiction}} \subseteq \ref{red-interdiction}$ representing the current subgame.

The algorithm begins with small initial subsets of strategies for each player. At every iteration, it computes a Nash equilibrium $(\widetilde{x}_b^*, \widetilde{x}_r^*)$ of the subgame using a standard linear program over the restricted support of actions in the subgame. Each player then computes a best response against the opponent’s mixed equilibrial strategy in the subgame using the respective best-response oracles, solving \textsc{Blue-BR} or \textsc{Red-BR} as needed. The best responses can be implemented in various ways, as we explained in the previous section, using different routing and knapsack solving algorithms, for instance. If these best responses are already present in the subgame, the algorithm terminates, returning the equilibrium of the subgame as an equilibrium of the full game. Otherwise, the new strategies are added to the respective subsets, and the process repeats.

Exact termination occurs when neither player can improve their utility by deviating from the current subgame support. However, in practice, it is often sufficient to terminate when the equilibrium gap, defined as $\nabla = u(\widetilde{x}_b^*, y^\text{BR}) - u(f^\text{BR}, \widetilde{x}_r^*)$, falls below a predefined tolerance $\epsilon > 0$. In that case, the strategy pair $(\widetilde{x}_b^*, \widetilde{x}_r^*)$ forms a $2\epsilon$-approximate Nash equilibrium.

\begin{algorithm}[t]
\caption{Double Oracle for Contested Route Planning Games}\label{alg:do}
\begin{algorithmic}[1]
\STATE $\widetilde{F}, \widetilde{Y} \gets \textsc{InitialSubgame}(F, Y)$
\REPEAT
\STATE $\widetilde{x}_b^*, \widetilde{x}_r^* \gets \textsc{NashEquilibrium}(\widetilde{F}, \widetilde{Y})$
\STATE $f^\text{BR}, y^\text{BR} \gets \textsc{BlueBR}(\widetilde{x}_r^*), \textsc{RedBR}(\widetilde{x}_b^*)$ 
\STATE $\widetilde{F}, \widetilde{Y} \gets \widetilde{F}\cup \{f^\text{BR}\}, \widetilde{Y}\cup \{y^\text{BR}\}$
\UNTIL{$\textsc{EquilibriumGap}(\widetilde{x}_b^*, \widetilde{x}_r^*, f^\text{BR}, y^\text{BR}) \leq \epsilon$}
\end{algorithmic}
\end{algorithm}
\section{Empirical Evaluation}

We now turn to experiments on contested route planning scenarios. The experiments use real-world urban network data from two regions: Southern California between Los Angeles and San Diego, and the Donetsk Oblast in eastern Ukraine, a current frontline in the ongoing Russo-Ukrainian conflict. Based on these maps, three routing scenarios were constructed.

The first two scenarios are set in Southern California and are part of a broader logistics model provided to us by the U.S. Office of Naval Research. This model has been calibrated to closely resemble training scenarios from the Marine Corps Air-Ground Combat Center at Twentynine Palms, used to train and certify Marine Littoral Regiments. The logistics network consists of two supply locations in Barstow and Twentynine Palms, CA, and 13 expeditionary advanced bases (EABs) situated along or near the Pacific coast. For our purposes, the first scenario considers routing between EABs located in MCAS Miramar and Carlsbad, while the second scenario involves routing between the Twentynine Palms supply point and the Carlsbad EAB.

The third scenario is situated in Ukraine during the ongoing Russo-Ukrainian conflict. As of mid-2025, the area around Toretsk and Niu-York remains a contested frontline, with continued fighting including artillery exchanges and Russian efforts to advance toward key Ukrainian defensive positions. Ukrainian forces have reinforced the area near Toretsk to counter further Russian movements toward Kostiantynivka and the broader Donetsk front. In Niu-York, only a few kilometers to the east, persistent skirmishes and drone strikes have posed threats to both civilian and military infrastructure. This third scenario is designed to mimic supply runs from Kramatorsk to Niu-York.

The routing data are sourced from OpenStreetMap (OSM). The Red and Blue models, although simplified for the purposes of this work, were developed in cooperation with the U.S. Office of Naval Research to reflect an adequate level of operational detail. The penalties~$P$ were assigned as follows: regular edges have $P(e) = 1$ (a ``low probability-of-kill (p-k)'' edge penalty in military terms), while edges tagged as ``bridge'' (which may also include certain elevated structures) in OSM have $P(e) = 3$ (a ``high p-k'' edge penalty). This binary penalty structure allows the computation of path throughput against a Red strategy~$x_r$ as
    $\mathbb{E}_{y \sim x_r}\prod_{e\in y}p(e)$, 
where $p(e)$ denotes the edge’s p-k, set to 0.5 for high-p-k edges and 0.2 for low-p-k edges. For edge interdiction costs, the costs are set proportional to the distances from the start and release points, according to
    $C(e) = \frac{1.0}{\min\left(|e,v_s|, |e,v_r|\right)}$,
and then normalized as
    $
    C(e) = \left\lceil 0.8 + \frac{(C(e) - \min_{e'} C(e')) \cdot (B + 1.3)}{\max_{e'} C(e') - \min_{e'} C(e')} \right\rceil.
    $
In all scenarios, we solve games for Red budgets $B$ ranging from 1 to 6.

All experiments were performed on an Intel Xeon Gold 6226 (2.9\,GHz) with 32\,GB of RAM. Linear programs were solved using the HiGHS Optimizer 1.9~\cite{huangfu2018parallelizing}, on a 64-bit Linux platform. Map data were obtained using osmnx 2.0.1, and the double oracle algorithm was implemented in Python 3.9.21 with an equilibrium gap of $\epsilon=0.1$. We used our own, non-optimized implementations of the A$^*$ algorithm as the Blue player oracle and a dynamic programming approach for the knapsack problem as the Red player oracle. Runtime results and the number of double oracle iterations required for convergence are summarized in Table~\ref{tab:scal}.

\begin{table}[h!]
\centering
\caption{Runtimes [s] and number of iterations of the double oracle.}
\begin{tabular}{l||cc|cc|cc|cc|cc|cc}
 & \multicolumn{2}{c|}{budget 1} 
 & \multicolumn{2}{c|}{budget 2} 
 & \multicolumn{2}{c|}{budget 3} 
 & \multicolumn{2}{c|}{budget 4} 
 & \multicolumn{2}{c|}{budget 5} 
 & \multicolumn{2}{c}{budget 6} \\
 & time & iters
 & time & iters
 & time & iters
 & time & iters
 & time & iters
 & time & iters \\
\hline
SoCal 1 & 2.44 & 37 & 3.41 & 50 & 2.96 & 41 & 6.36 & 77 & 6.82 & 81 & 8.82 & 99 \\
SoCal 2 & 20.07 & 53 & 24.45 & 59 & 44.26 & 88 & 43.66 & 85 & 71.51 & 127 & 84.08 & 140 \\
Ukraine & 1.24 & 19 & 1.35 & 23 & 1.75 & 32 & 1.99 & 36 & 2.20 & 40 & 2.67 & 49
\end{tabular}
\label{tab:scal}
\end{table}

The remainder of this section is devoted to a qualitative analysis. We examine how the optimal game-theoretic strategies change as the Red budget increases. These strategies are compared in terms of expected throughput to two baselines: the fastest route between the start and release points, and a deterministic Red-aware solution designed to reduce the number of high probability-of-kill attacks. Finally, we assess the robustness of the computed game-theoretic solutions against Red players with either increased or decreased budgets.

\subsection{Scenario 1: Southern California -- MCAS Miramar to Carlsbad }

\begin{figure}[t]
    \centering
    \begin{minipage}[b]{0.49\textwidth}
        \centering
        \includegraphics[draft=false,width=\linewidth]{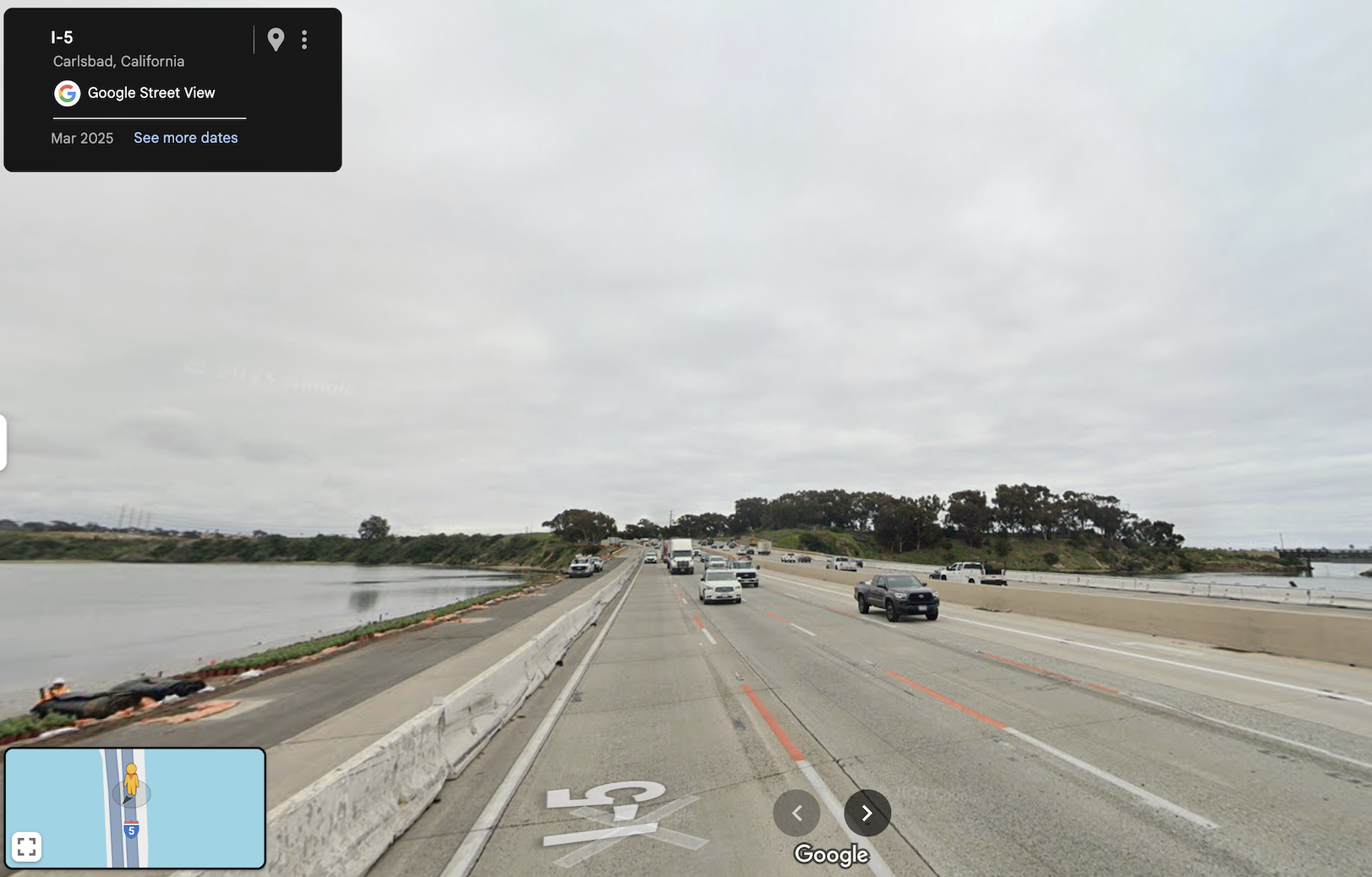}
    \end{minipage}
    \hfill
    \begin{minipage}[b]{0.24\textwidth}
        \centering
        \includegraphics[width=\linewidth]{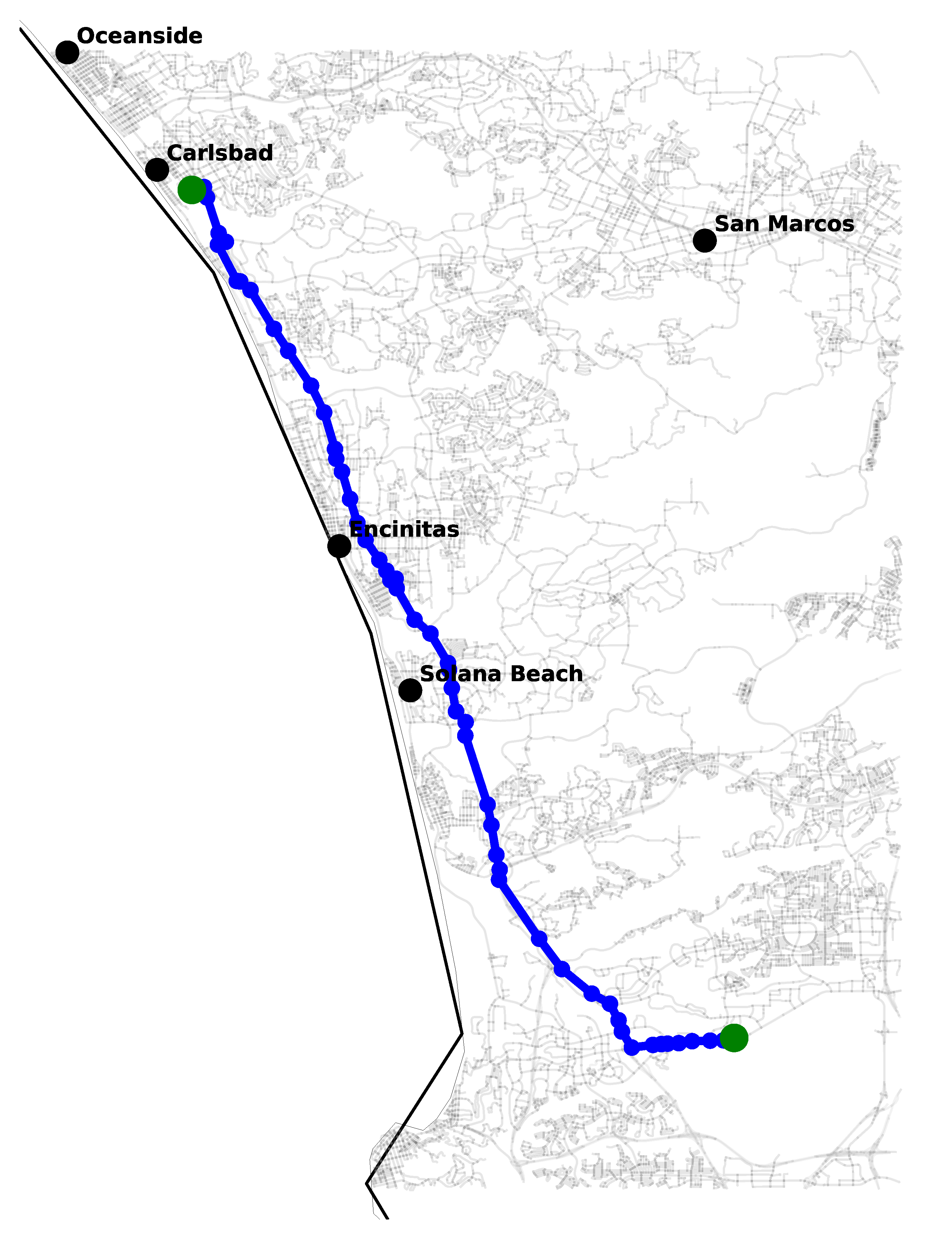}
    \end{minipage}
    \hfill
    \begin{minipage}[b]{0.24\textwidth}
        \centering
        \includegraphics[width=\linewidth]{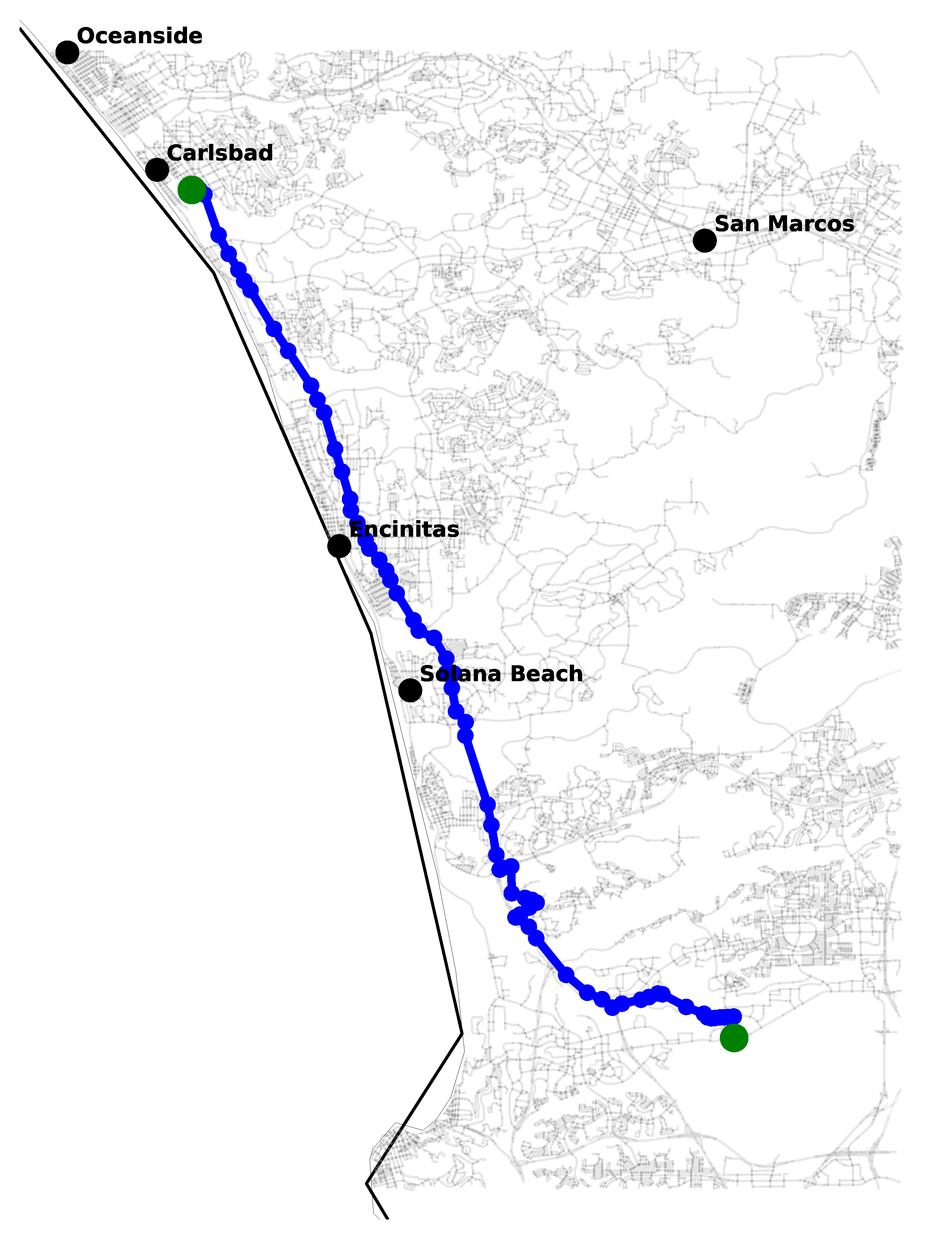}
    \end{minipage}

    \caption{Typical scenery and fastest and red-aware routes in Scenario 1.}
    \label{fig:sc1:fp}
\end{figure}

The first scenario examines route planning between two locations connected by a major highway as well as several smaller roads. Both the start and release locations are situated in urban areas surrounded by local street networks. The central highway features many elevated segments, which constitute high p-k edges, as illustrated on the left in Figure~\ref{fig:sc1:fp}. The graph obtained from OpenStreetMap includes 10,440 nodes and 28,316 edges, representing a realistic problem size.

The game-theoretic solutions for this scenario are shown in Figure~\ref{fig:sc1-gt}. The probabilities of the individual paths chosen by Blue are omitted. Each edge in Red’s support is highlighted in red, with greater hue intensity indicating higher aggregated probability. Notably, this scenario does not exhibit a significant increase in the size of the Blue player’s support as Red’s budget increases. We attribute this to the large number of available high p-k targets. Across all cases, the throughput remains above 70\%. Table~\ref{tab:sc1-tablh} reports the probabilities of encountering different numbers of low and high p-k attacks under the various game-theoretic solutions. It is worth noting that across all Red budgets, there is a relatively high probability that Blue would avoid any attack entirely.
\begin{table}[h!]
\centering
\caption{Probabilities of low (L) and high (H) p-k encounters in Scenario 1.}
\begin{tabular}{c||c|c|c|c|c|c}
~~Red budget~~ & ~~no attack~~ & ~~~1L~~~ & ~~~1H~~~ & ~1L/1H~ & ~~~2H~~~ & ~~~3H~~~ \\
\hline
1 & 0.83 & - & 0.17 & - & - & - \\
2 & 0.67 & - & 0.33 & - & <0.01 & - \\
3 & 0.5 & - & 0.49 & - & <0.01 & <0.01 \\
4 & 0.53 & <0.01 & 0.39 & <0.01 & 0.08 & <0.01 \\
5 & 0.29 & <0.01 & 0.71 & <0.01 & <0.01 & <0.01 \\
6 & 0.14 & - & 0.86 & - & <0.01 & -
\end{tabular}
\label{tab:sc1-tablh}
\end{table}

\begin{figure}[t]
    \centering

    \begin{subfigure}[b]{0.32\textwidth}
        \includegraphics[width=\linewidth]{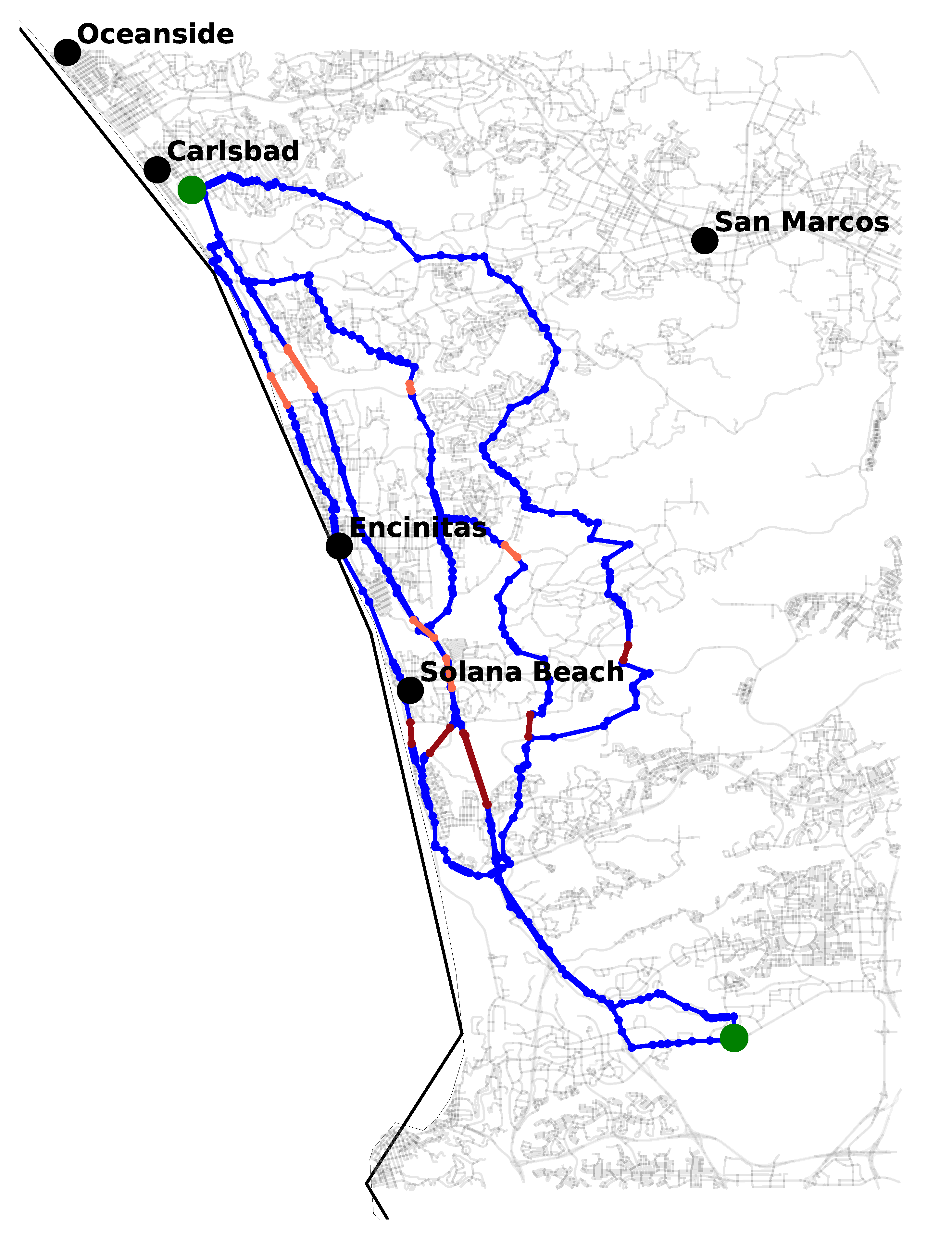}
        \caption{Red budget 1}
    \end{subfigure}
    \hfill
    \begin{subfigure}[b]{0.32\textwidth}
        \includegraphics[width=\linewidth]{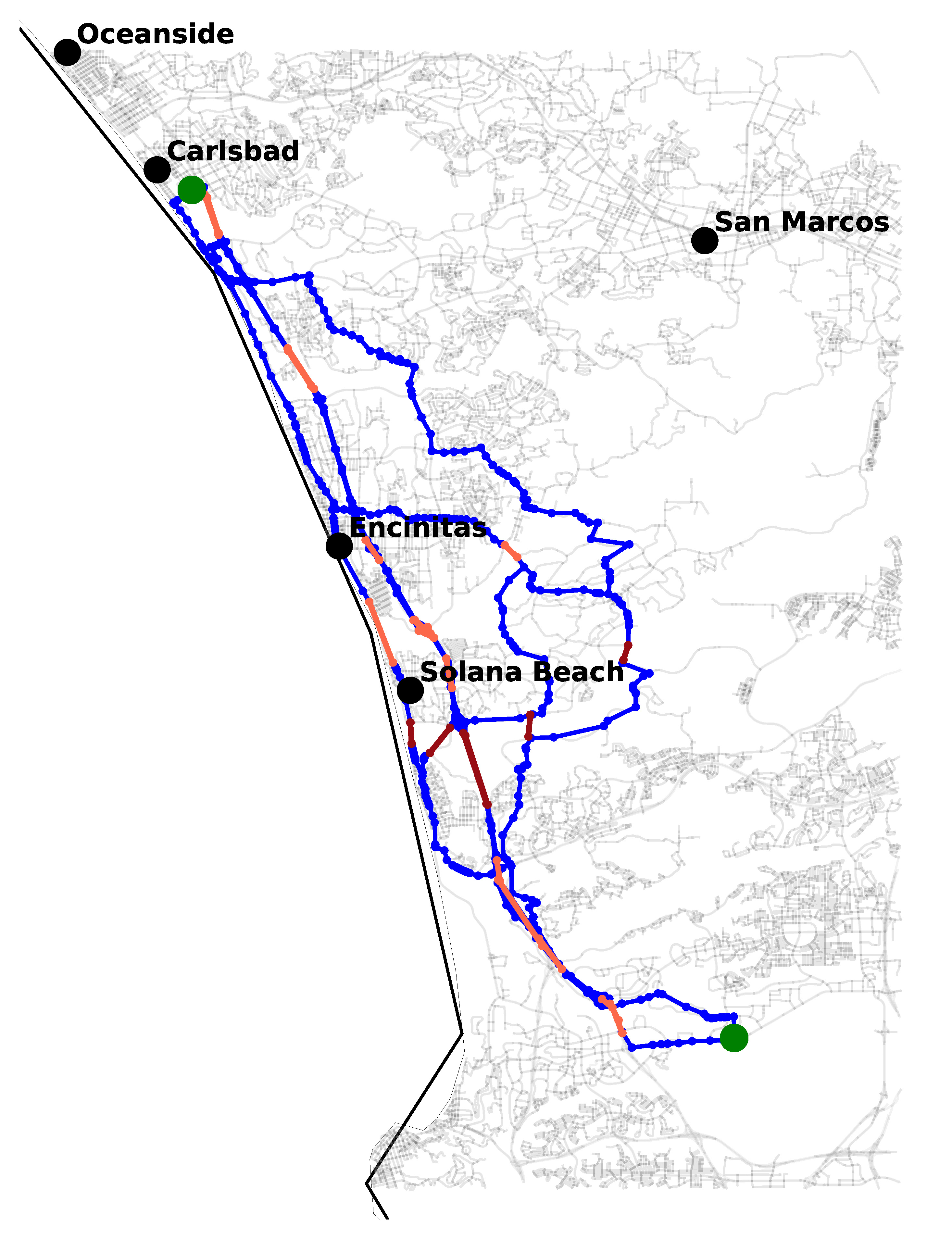}
        \caption{Red budget 2}
    \end{subfigure}
    \hfill
    \begin{subfigure}[b]{0.32\textwidth}
        \includegraphics[width=\linewidth]{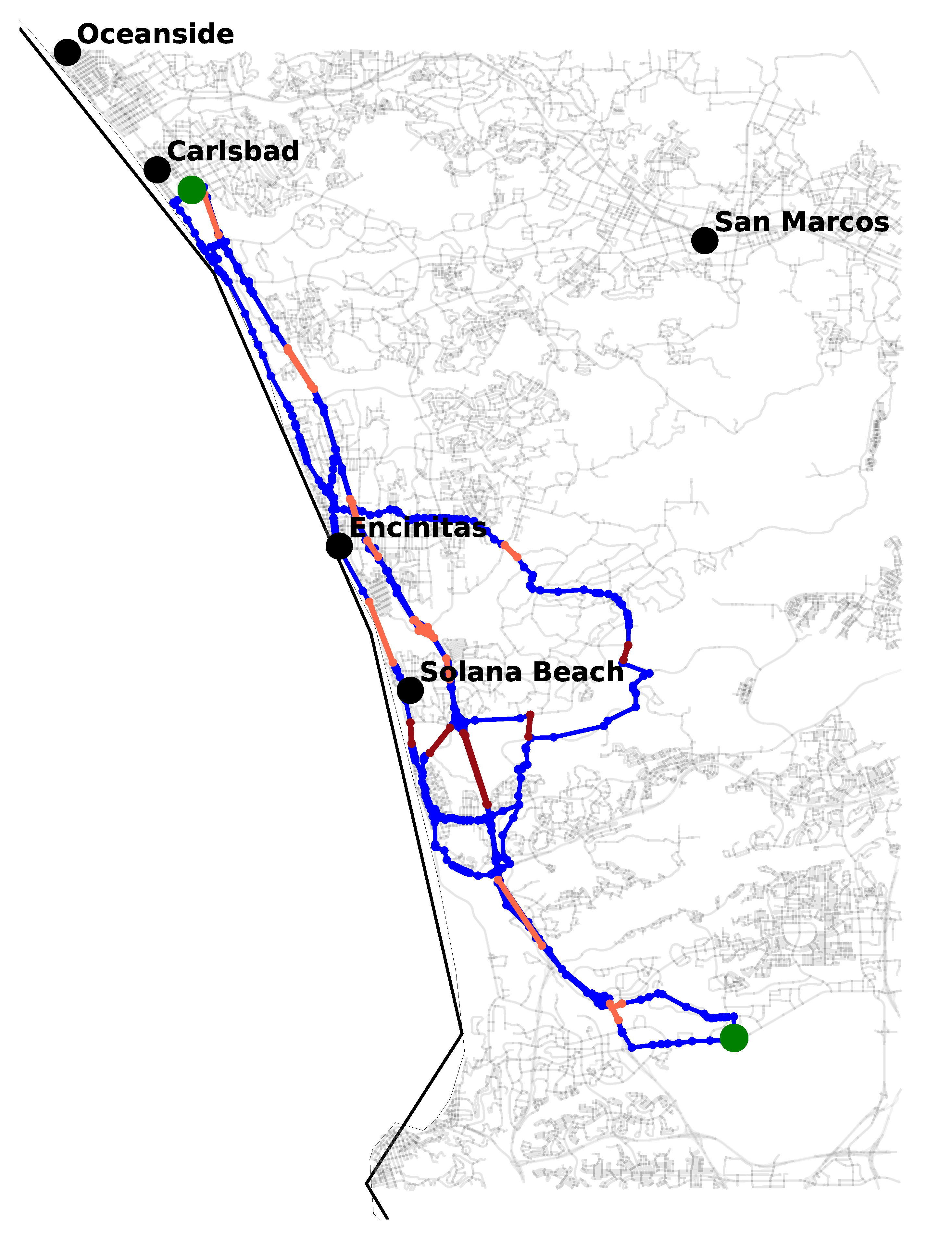}
        \caption{Red budget 3}
    \end{subfigure}

    \begin{subfigure}[b]{0.32\textwidth}
        \includegraphics[width=\linewidth]{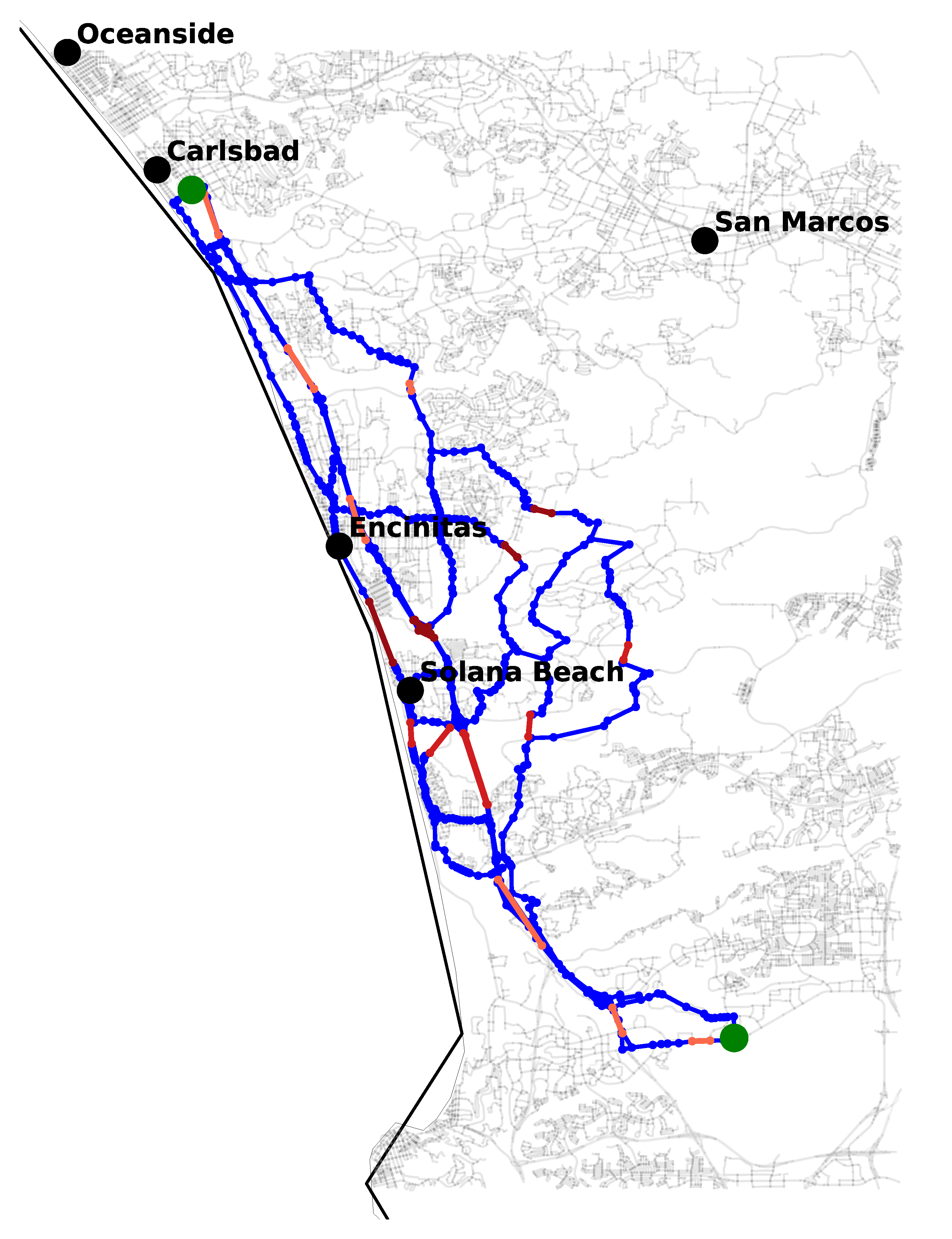}
        \caption{Red budget 4}
    \end{subfigure}
    \hfill
    \begin{subfigure}[b]{0.32\textwidth}
        \includegraphics[width=\linewidth]{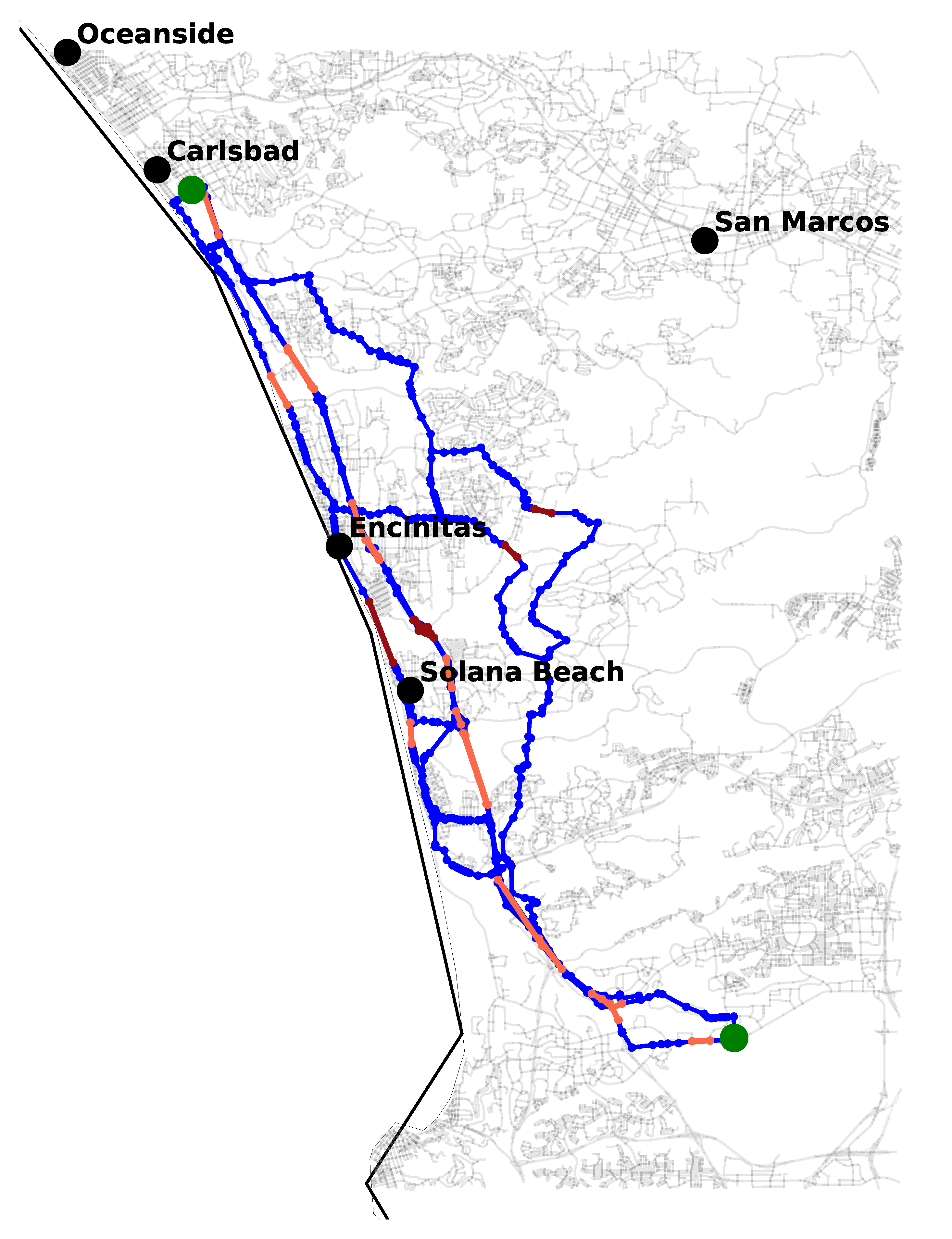}
        \caption{Red budget 5}
    \end{subfigure}
    \hfill
    \begin{subfigure}[b]{0.32\textwidth}
        \includegraphics[width=\linewidth]{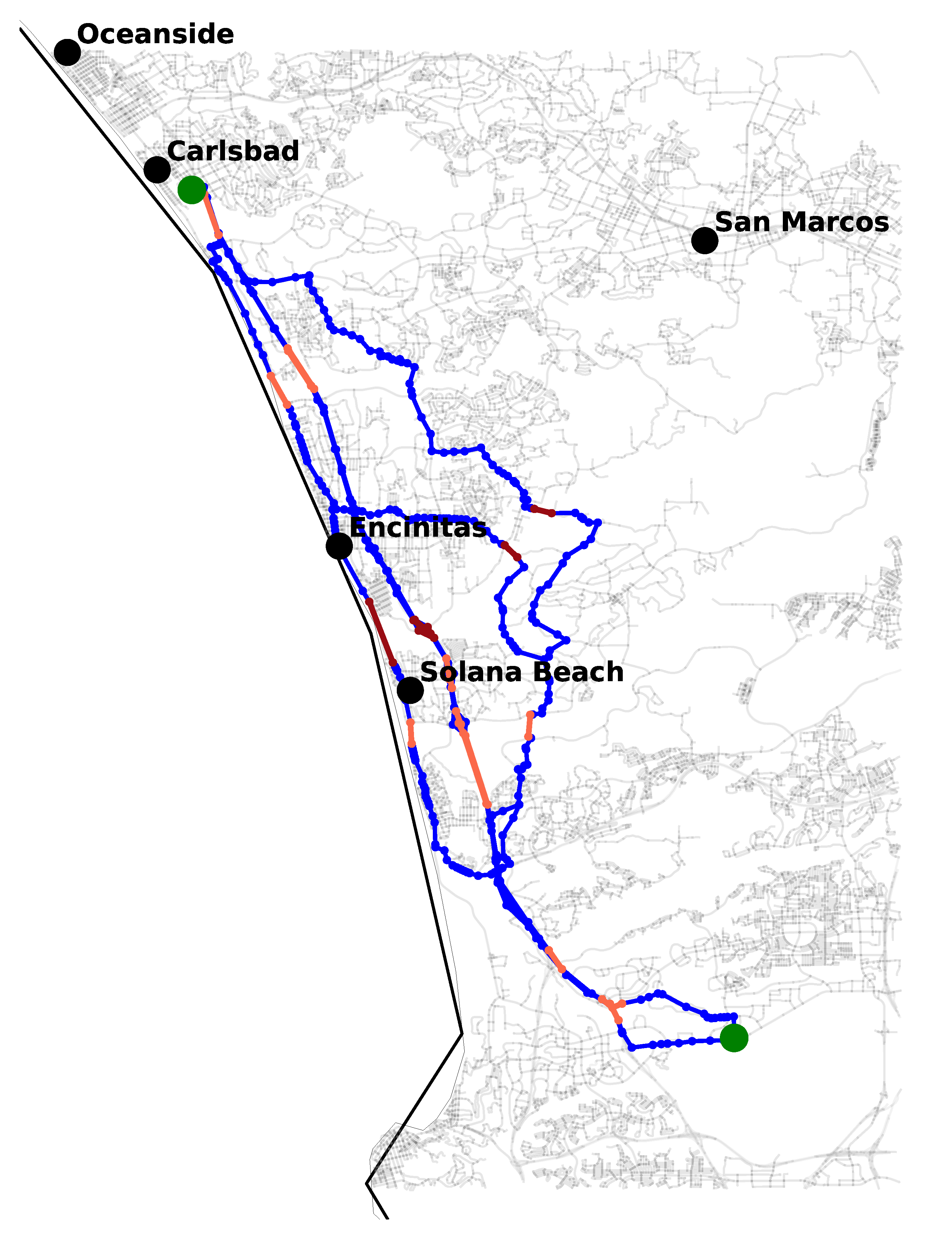}
        \caption{Red budget 6}
    \end{subfigure}

    \caption{Computed game-theoretic solutions in Scenario 1.}
    \label{fig:sc1-gt}
\end{figure}

\begin{table}[h!]
\centering
\caption{Throughput of solutions against different Red models in Scenario 1.}
\begin{tabular}{c||c|c|c|c|c|c}
Exp.$\backslash$True budget & ~~~~1~~~~ & ~~~~2~~~~ & ~~~~3~~~~ & ~~~~4~~~~ & ~~~~5~~~~ & ~~~~6~~~~ \\
\hline
1 & 0.90 & 0.50 & 0.55 & 0.45 & 0.35 & 0.38 \\
2 & 0.83 & 0.83 & 0.79 & 0.71 & 0.69 & 0.72 \\
3 & 0.83 & 0.75 & 0.85 & 0.77 & 0.73 & 0.72 \\
4 & 0.86 & 0.71 & 0.64 & 0.75 & 0.73 & 0.68 \\
5 & 0.86 & 0.71 & 0.57 & 0.63 & 0.73 & 0.72  \\
6 & 0.86 & 0.79 & 0.57 & 0.61 & 0.62 & 0.70
\end{tabular}
\label{tab:sc1-robust}
\end{table}

For comparison, the fastest route and a Red-aware deterministic path are depicted on the right in Figure~\ref{fig:sc1:fp}. Among the 54 edges of the fastest route, 15 are classified as high p-k. Given the edge costs, with a budget of 6, Red’s optimal strategy is to deploy 5 high p-k attacks, reducing throughput to 3.12\%. If the Red-aware path is selected instead, Red can subject it to 3 high p-k and 1 low p-k attack within the same budget, reducing throughput to 10\%. In other words, the game-theoretic solution improves throughput by more than 22 and 7 times compared to these deterministic alternatives.

Finally, Table~\ref{tab:sc1-robust} presents a robustness analysis of the game-theoretic solutions, illustrating how throughput changes if the computed solution is challenged by Red with a different budget. Apart from the smallest budget case, where for higher budgets the throughput can fall to almost a half of the optimized value, the solutions remain relatively robust, due to the network’s density and the large number of high p-k edges.

\subsection{Scenario 2: Southern California --  Twentynine Palms to Carlsbad}

The second scenario is significantly larger than the first, with the OpenStreetMap-derived graph containing 56,410 nodes and 158,676 edges, although the network itself is considerably sparser. This area includes numerous smaller towns and villages connected by highways and expressways that are frequently exposed, creating bottlenecks (visible in the solutions) that are highly susceptible to Red’s attacks. An example of the typical terrain in this region is shown on the left in Figure~\ref{fig:sc2-fp}. The game-theoretic solutions for this scenario are presented in Figure~\ref{fig:sc2-gt}. Possibly due to the lower proportion of high-p-k edges relative to the overall graph size, the solutions display a pattern also observed in the third scenario: the complexity of Blue’s strategies increases as Red’s budget grows. Table~\ref{tab:sc2-pk} reports the expected p-k attacks. It is notable that, due to the presence of bottlenecks, certain types of attacks become unavoidable for Red budgets between 2 and 4, while lower and higher budgets offer more flexibility to Blue.

\begin{figure}[t]
    \centering

    \begin{minipage}[b]{0.56\textwidth}
        \centering
        \includegraphics[draft=false,width=\linewidth]{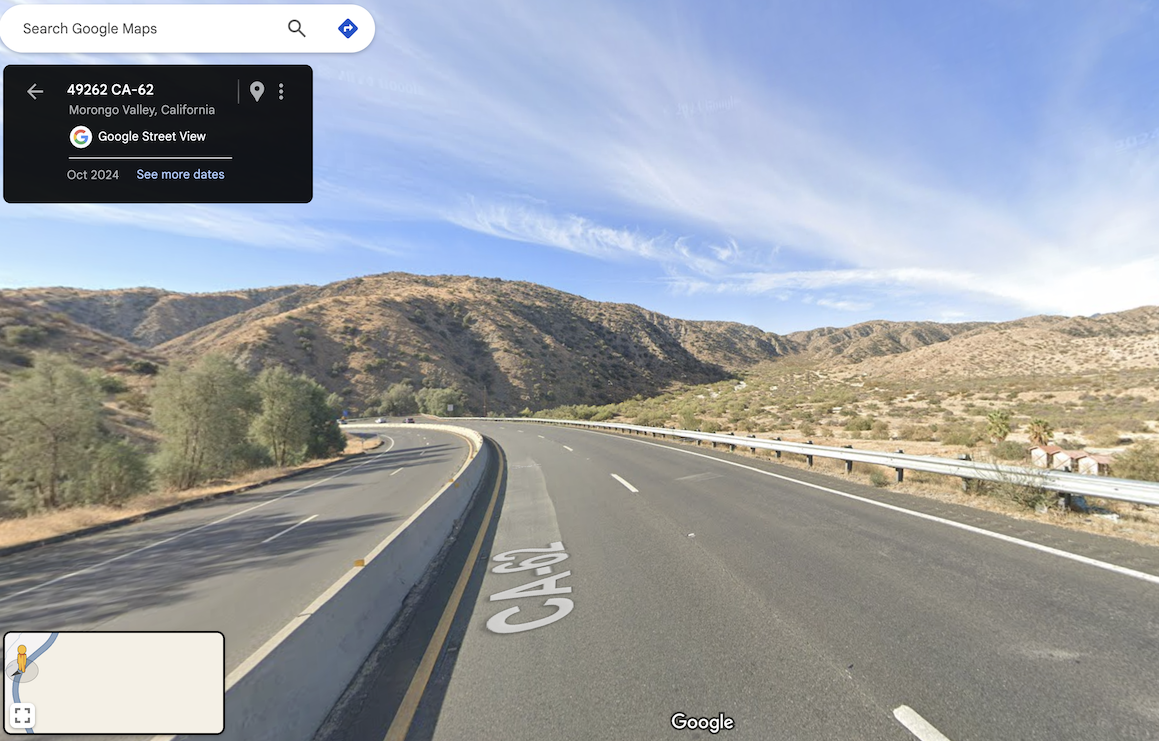}
    \end{minipage}
    \hfill
    \begin{minipage}[b]{0.38\textwidth}
        \centering
        \includegraphics[width=\linewidth]{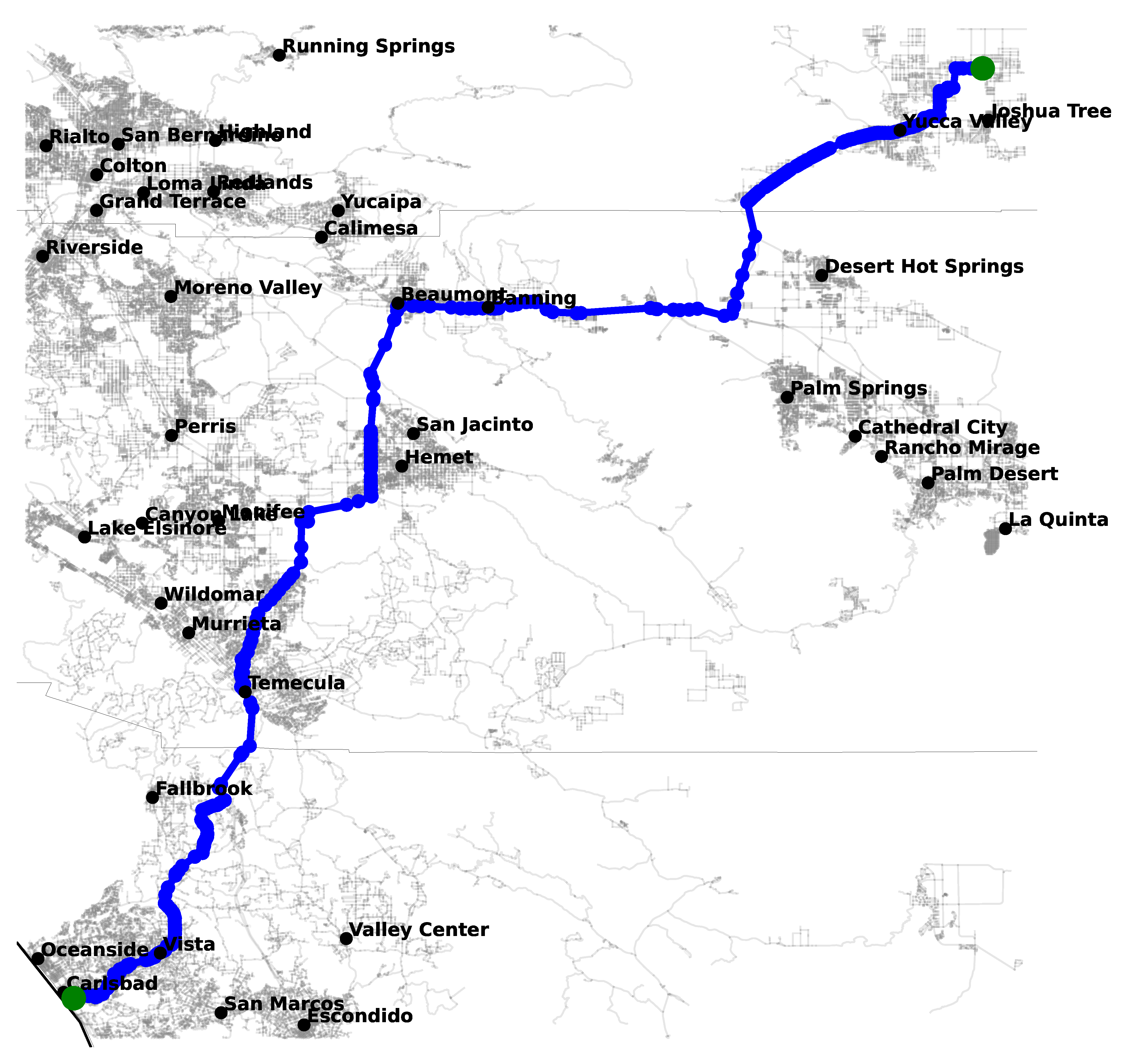}
    \end{minipage}

    \caption{Typical scenery and the fastest route in Scenario 2.}
    \label{fig:sc2-fp}
\end{figure}

\begin{figure}[p]
    \centering

    \begin{subfigure}[b]{0.48\textwidth}
        \includegraphics[width=\linewidth]{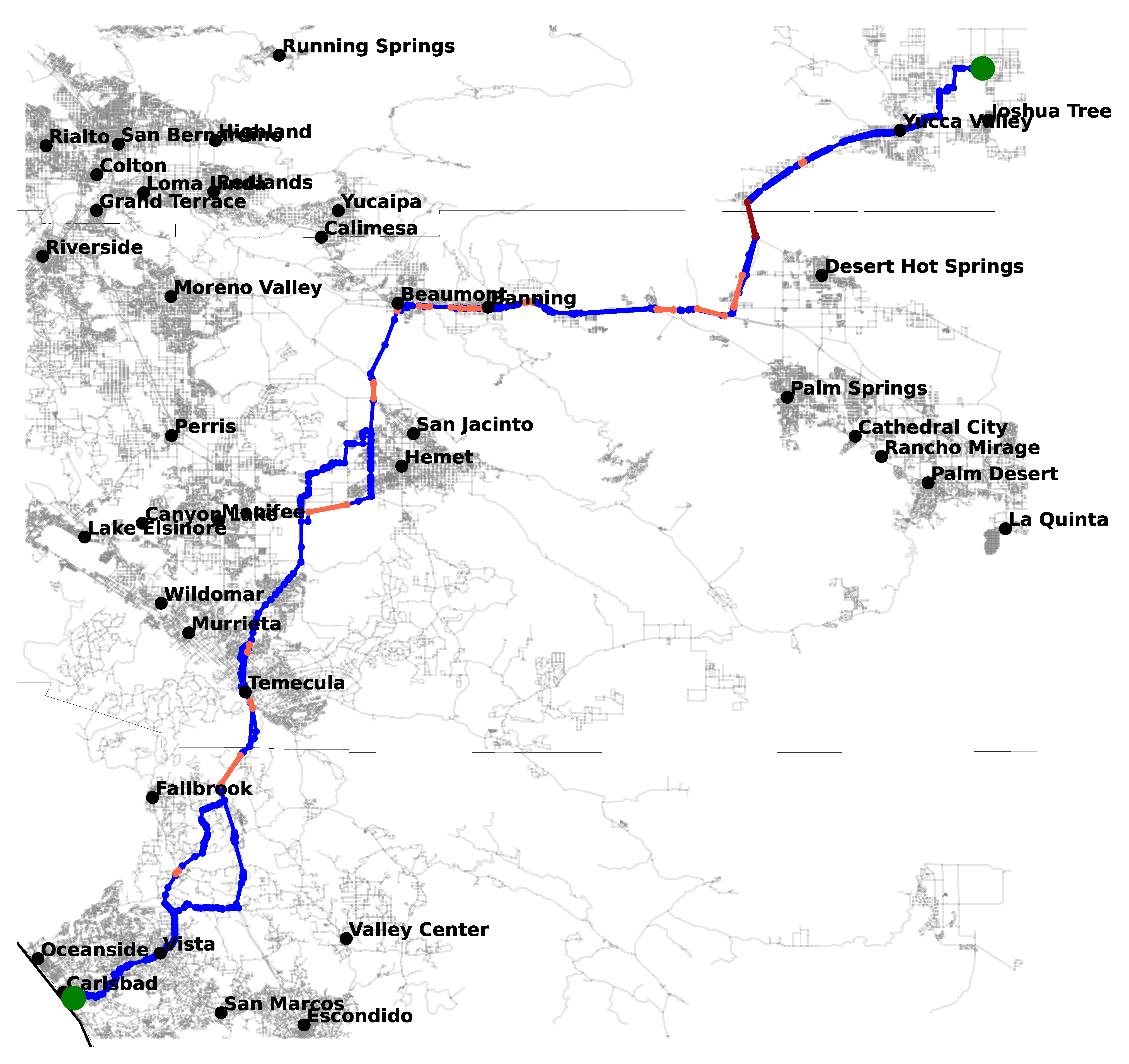}
        \caption{Red budget 1}
    \end{subfigure}
    \hfill
    \begin{subfigure}[b]{0.48\textwidth}
        \includegraphics[width=\linewidth]{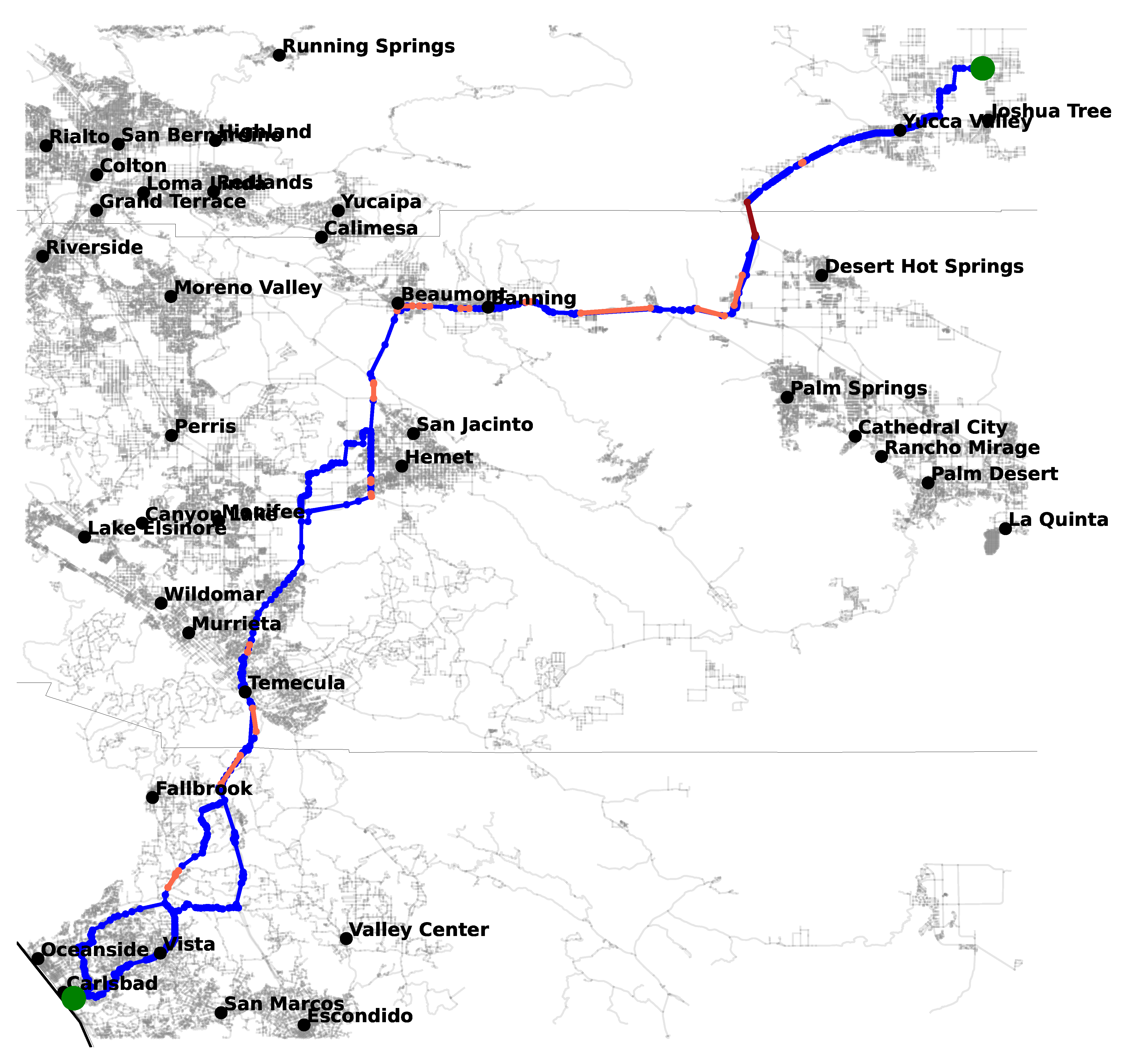}
        \caption{Red budget 2}
    \end{subfigure}

    \begin{subfigure}[b]{0.48\textwidth}
        \includegraphics[width=\linewidth]{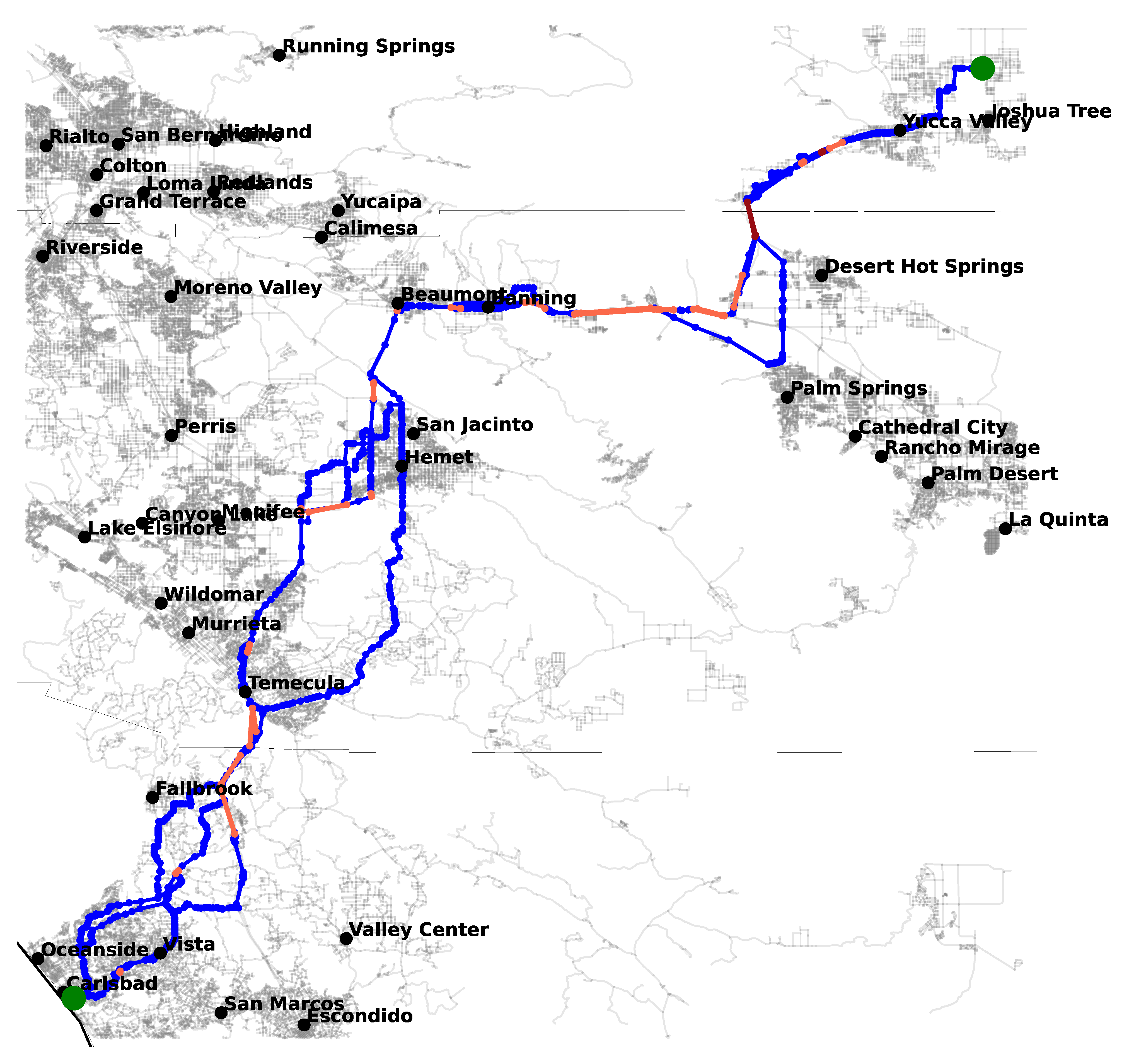}
        \caption{Red budget 3}
    \end{subfigure}
    \hfill
    \begin{subfigure}[b]{0.48\textwidth}
        \includegraphics[width=\linewidth]{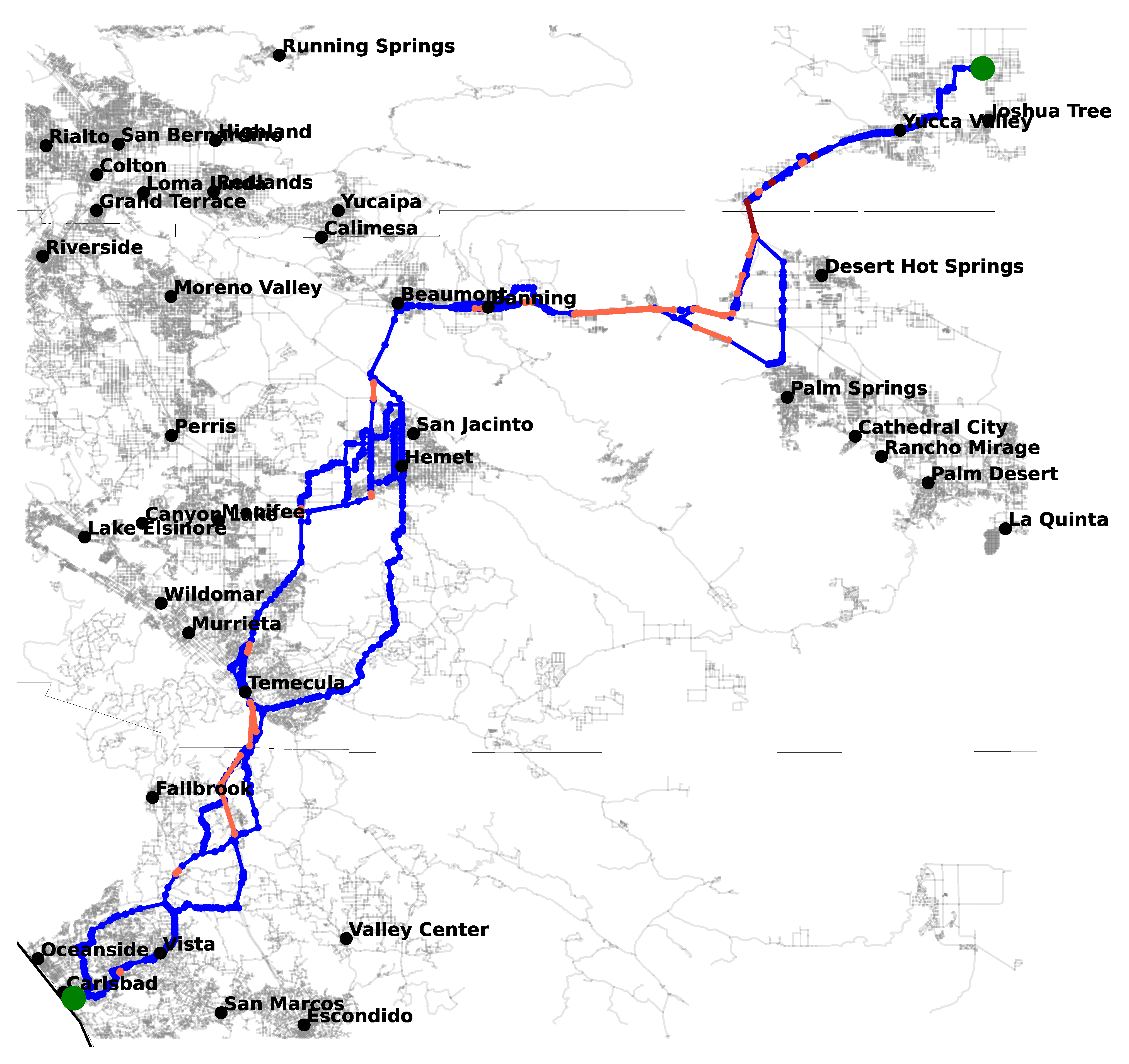}
        \caption{Red budget 4}
    \end{subfigure}

    \begin{subfigure}[b]{0.48\textwidth}
        \includegraphics[width=\linewidth]{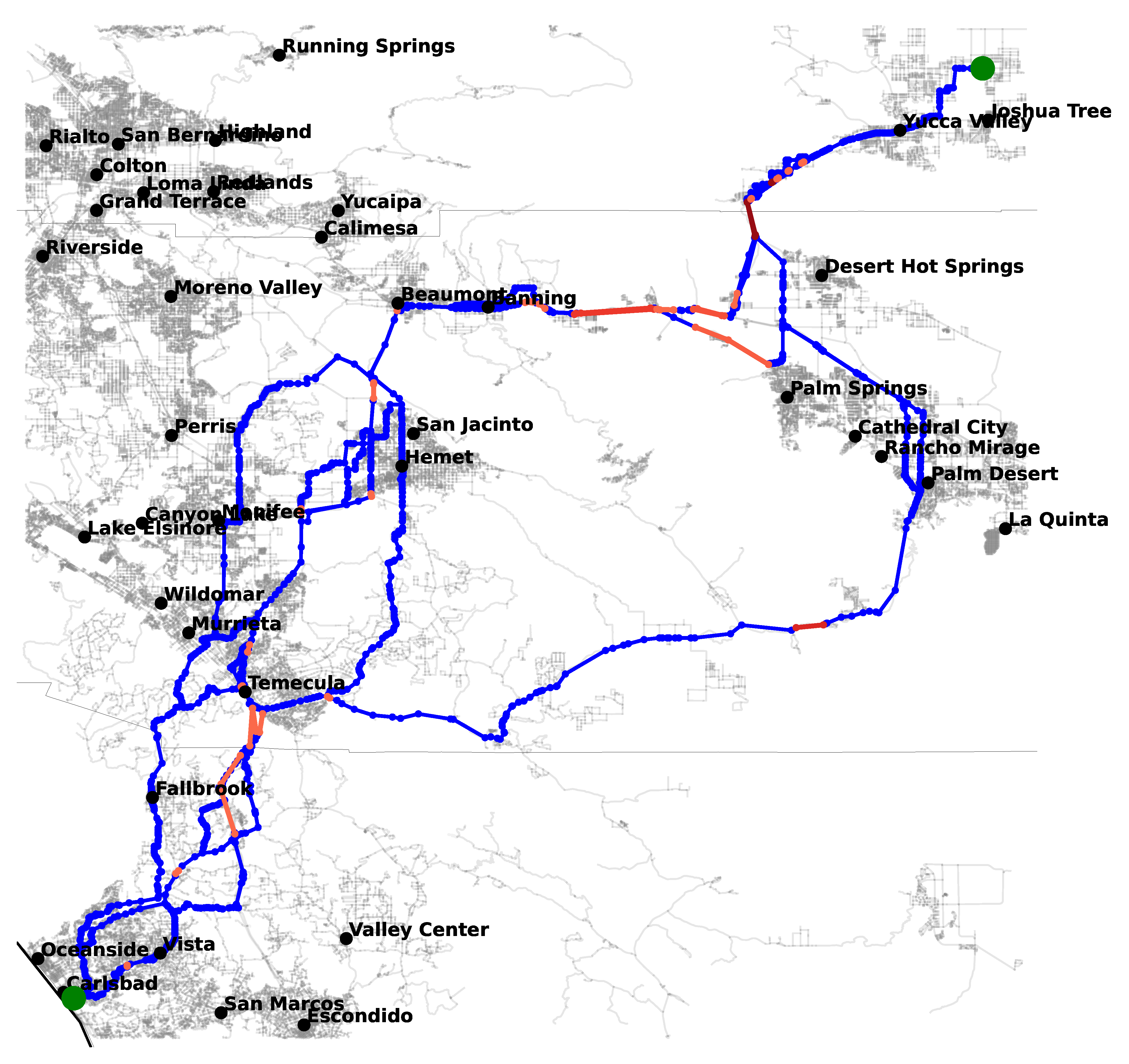}
        \caption{Red budget 5}
    \end{subfigure}
    \hfill
    \begin{subfigure}[b]{0.48\textwidth}
        \includegraphics[width=\linewidth]{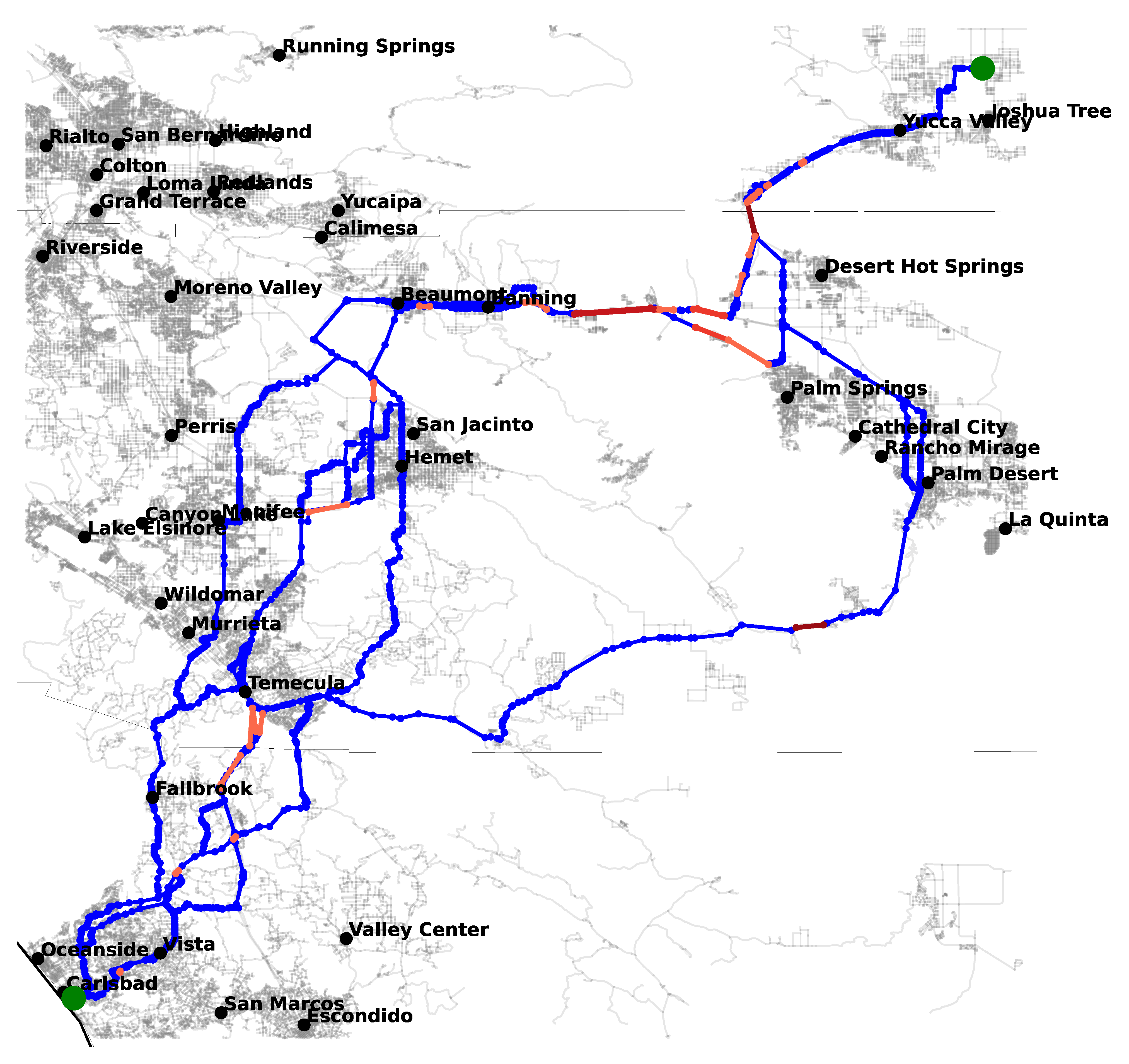}
        \caption{Red budget 6}
    \end{subfigure}

    \caption{Computed game-theoretic solutions in Scenario 2.}
    \label{fig:sc2-gt}
\end{figure}

Nevertheless, the results are still much better than those from deterministic baselines. In the fastest route, depicted on the right in Figure~\ref{fig:sc2-fp}, 34 out of 370 edges are classified as high-p-k. As a result, with a budget of 6, Red can deploy 6 high-p-k attacks, reducing throughput to 1.56\%. There is no Red-aware deterministic path that can improve upon this result. In contrast, the corresponding game-theoretic solution achieves a throughput of 29\%. Table~\ref{tab:sc2-robust} shows the robustness analysis of the computed solutions, which, with a few exceptions, remain relatively robust even in the presence of the bottlenecks.

\begin{table}[h!]
\centering
\caption{Probabilities of low (L) and high (H) p-k encounters in Scenario 2.}
\begin{tabular}{c||c|c|c|c|c|c|c|c}
~~Red budget~~ & ~no attack~ & ~~1H~~ & ~1L/1H~ & 2H & 2L/1H & 1L/2H & 3H & 1L/3H  \\
\hline
1 & 0.5 & 0.5 & - & - & - & - & - & - \\
2 & <0.01 & 0.99 & - & <0.01 & - & - & - & - \\
3 & <0.01 & <0.01 & 0.98 & <0.01 & - & - & - & - \\
4 & - & <0.01 & <0.01 & <0.01 & 0.99 & <0.01 & <0.01 & - \\
5 & - & - & 0.56 & - & <0.01 & 0.38 & - & 0.06 \\
6 & - & 0.08 & <0.01 & 0.83 & <0.01 & <0.01 & 0.08 & <0.01
\end{tabular}
\label{tab:sc2-pk}
\end{table}

\begin{table}[h!]
\centering
\caption{Throughput of solutions against different Red models in Scenario 2.}
\begin{tabular}{c||c|c|c|c|c|c}
Exp.$\backslash$True budget & ~~~~1~~~~ & ~~~~2~~~~ & ~~~~3~~~~ & ~~~~4~~~~ & ~~~~5~~~~ & ~~~~6~~~~ \\
\hline
1 & 0.75 & 0.54 & 0.41 & 0.35 & 0.27 & 0.26 \\
2 & 0.75 & 0.58 & 0.50 & 0.32 & 0.23 & 0.19 \\
3 & 0.75 & 0.53 & 0.38 & 0.28 & 0.19 & 0.16 \\
4 &  0.67 & 0.50 & 0.35 & 0.34 & 0.28 & 0.26 \\
5 & 0.67 & 0.50 & 0.31 & 0.25 & 0.27 & 0.21  \\
6 & 0.64 & 0.46 & 0.35 & 0.26 & 0.26 & 0.29
\end{tabular}
\label{tab:sc2-robust}
\end{table}

\subsection{Scenario 3: Ukraine --  Kramatorsk to Niu-York}

\begin{figure}[t]
    \centering

    \begin{minipage}[b]{0.42\textwidth}
        \centering
        \includegraphics[draft=false,width=\linewidth]{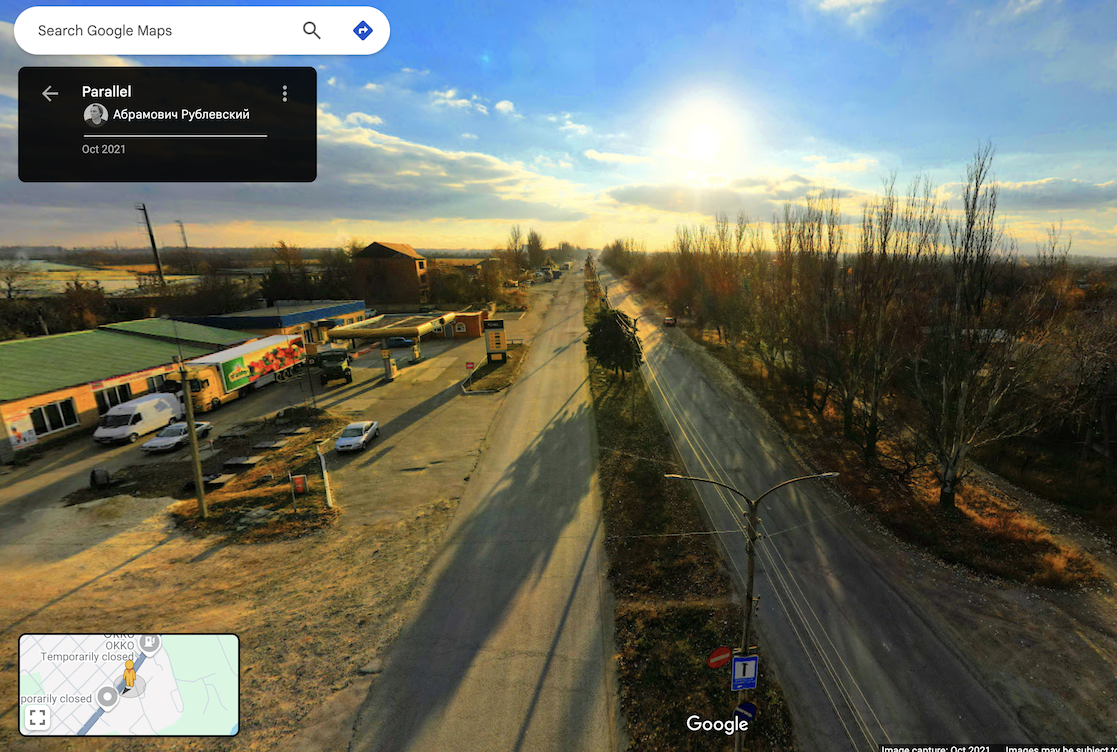}
    \end{minipage}
    \hfill
    \begin{minipage}[b]{0.28\textwidth}
        \centering
        \includegraphics[width=\linewidth]{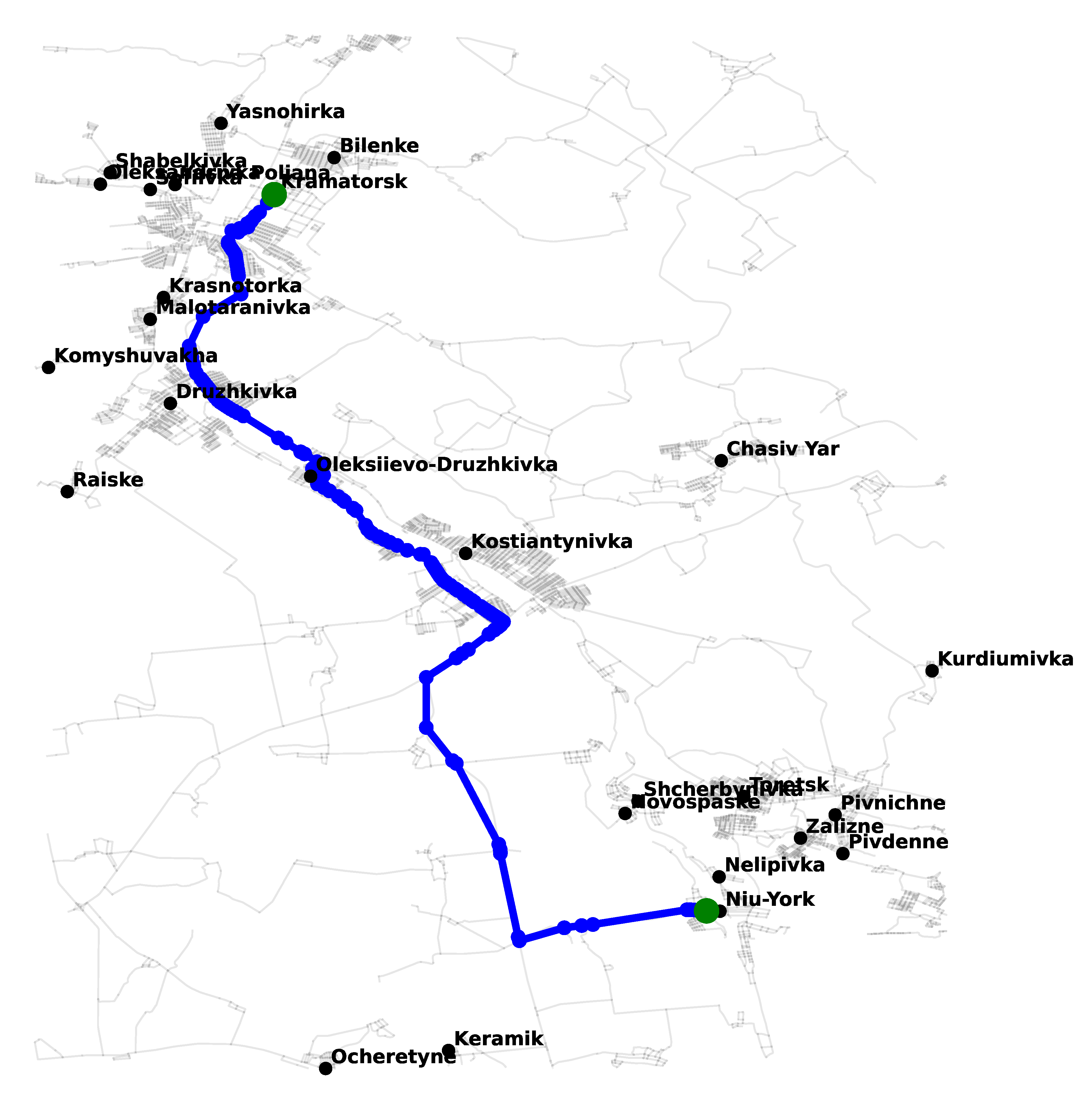}
    \end{minipage}
    \hfill
    \begin{minipage}[b]{0.28\textwidth}
        \centering
        \includegraphics[width=\linewidth]{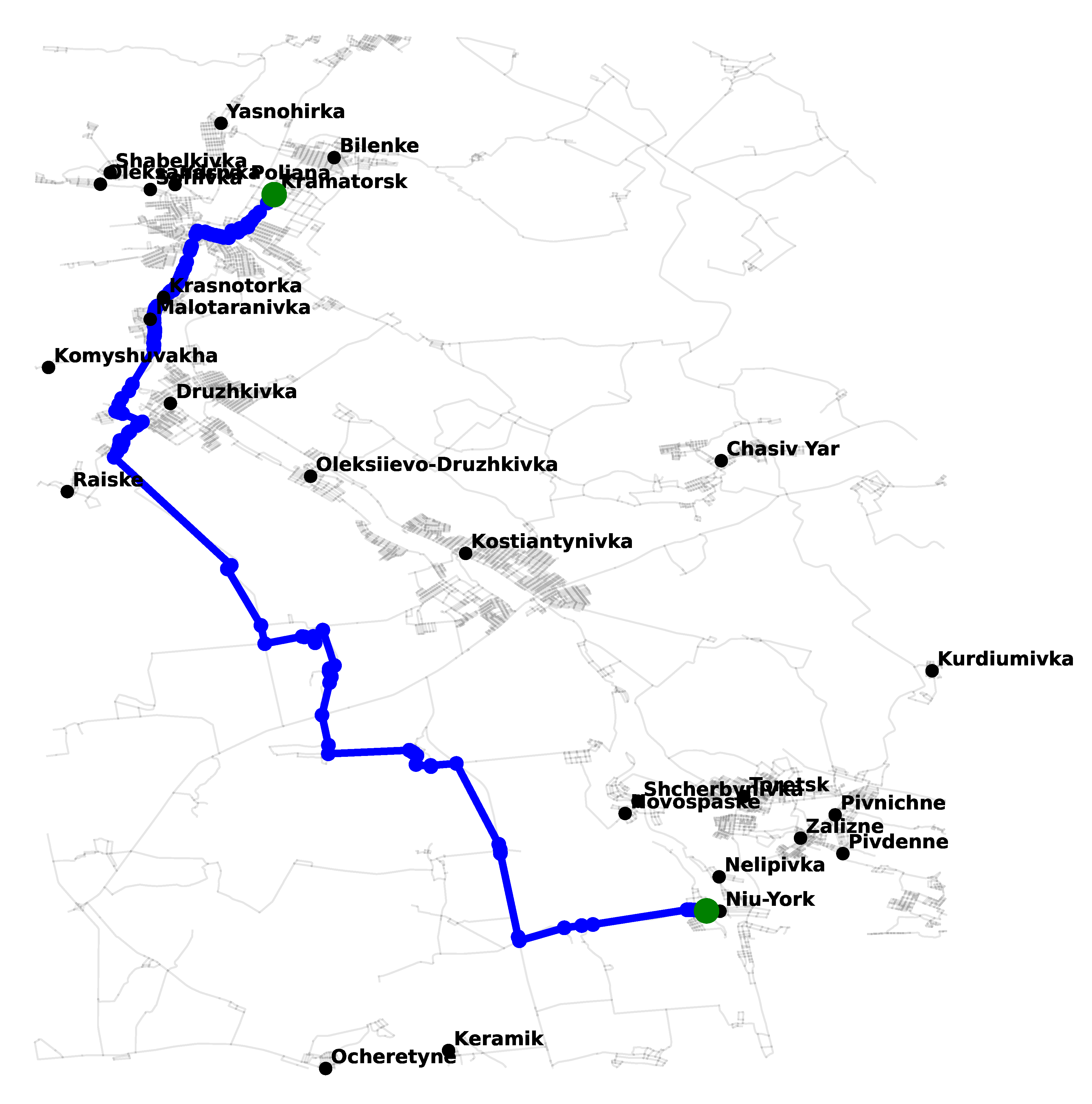}
    \end{minipage}

    \caption{Typical scenery and fastest and red-aware routes in Scenario 3.}
    \label{fig:sc3-fp}
\end{figure}

The third scenario models supply runs along the Ukrainian frontline. The OpenStreetMap graph for this region contains 5,730 nodes and 18,110 edges. The area consists of smaller towns and villages, but with a higher network density than in the second scenario. An overview of the region is provided on the left in Figure~\ref{fig:sc3-fp}. The corresponding game-theoretic solutions are presented in Figure~\ref{fig:sc3-gt}. As in the previous scenario, the size of Blue’s path support increases with Red’s budget, showing more diverse strategies of reaching the release point as Red gets stronger. Table~\ref{tab:sc3-pk} further illustrates the advantages of diversified, strategic route planning in areas with reasonably well-developed road networks. Across all budgets, in more than half of the cases, Blue is able to avoid Red entirely, completing the run unscathed. Interestingly, even with higher budgets, Red is unable to deploy more than one low and one high p-k attack.
\begin{table}[h!]
\centering
\caption{Probabilities of low (L) and high (H) p-k encounters in Scenario 3.}
\begin{tabular}{c||c|c|c|c|c|c|c|c|c}
~~Red budget~~ & no attack & 1L & 1H & 2L & 1L/1H & 2H & 3L & 2L/1H & 4L \\
\hline
1 & 0.75 & <0.01 & 0.25 & - & - & - & - & - & - \\
2 & 0.58 & 0.07 & 0.33 & - & - & 0.02 & - & - & - \\
3 & 0.56 & 0.15 & 0.21 & - & 0.07 & 0.01 & <0.01 & - & - \\
4 & 0.55 & 0.11 & 0.14 & <0.01 & 0.12 & 0.01 & 0.04 & <0.01 & 0.02 \\
5 & 0.49 & 0.10 & 0.23 & 0.02 & 0.09 & 0.01 & 0.03 & 0.02 & - \\
6 & 0.51 & 0.06 & 0.15 & 0.01 & 0.10 & 0.04 & 0.07 & <0.01 & 0.05
\end{tabular}
\label{tab:sc3-pk}
\end{table}

\begin{figure}[H]
    \centering

    \begin{subfigure}[b]{0.32\textwidth}
        \includegraphics[width=\linewidth]{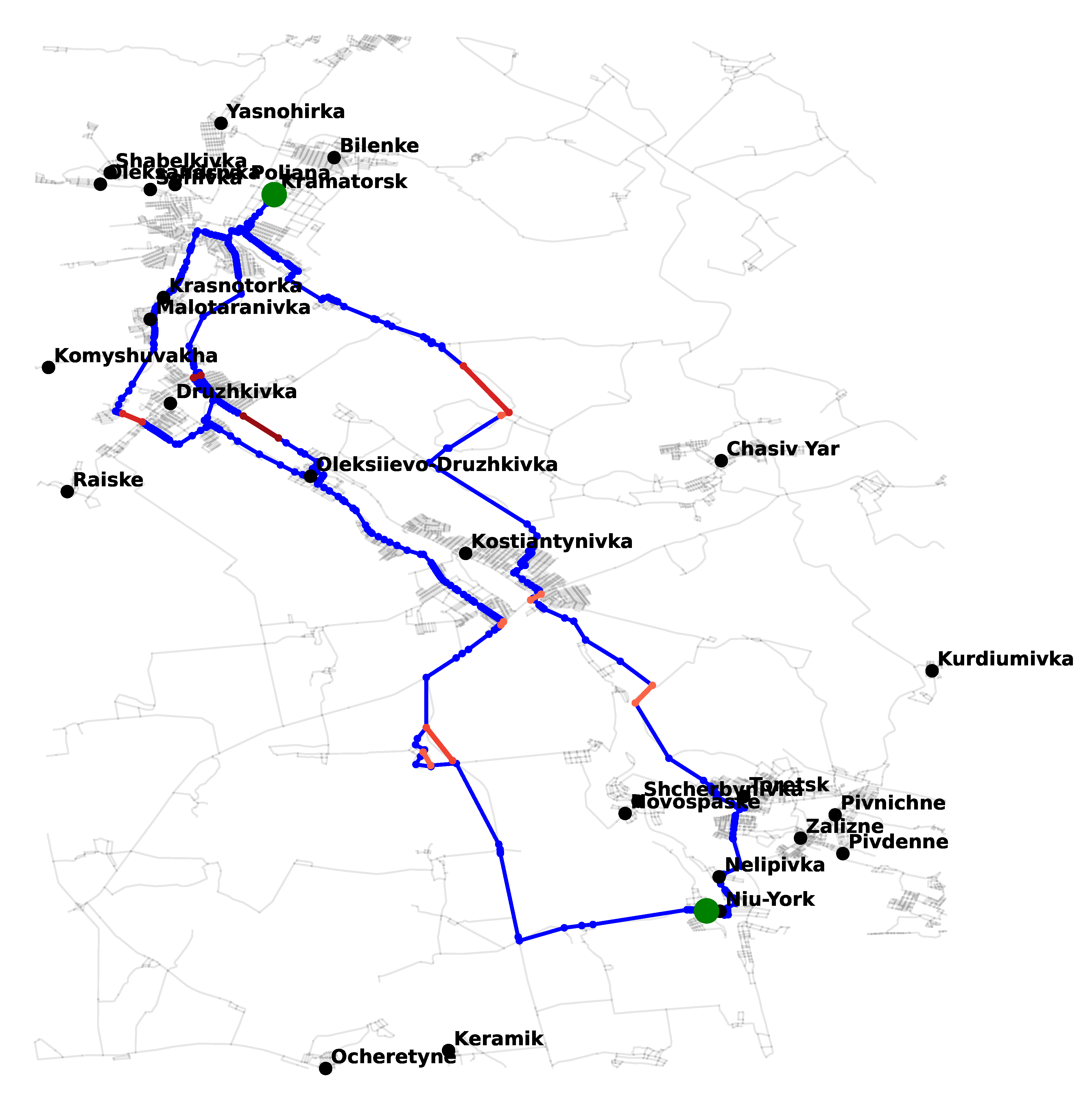}
        \caption{Red budget 1}
    \end{subfigure}
    \hfill
    \begin{subfigure}[b]{0.32\textwidth}
        \includegraphics[width=\linewidth]{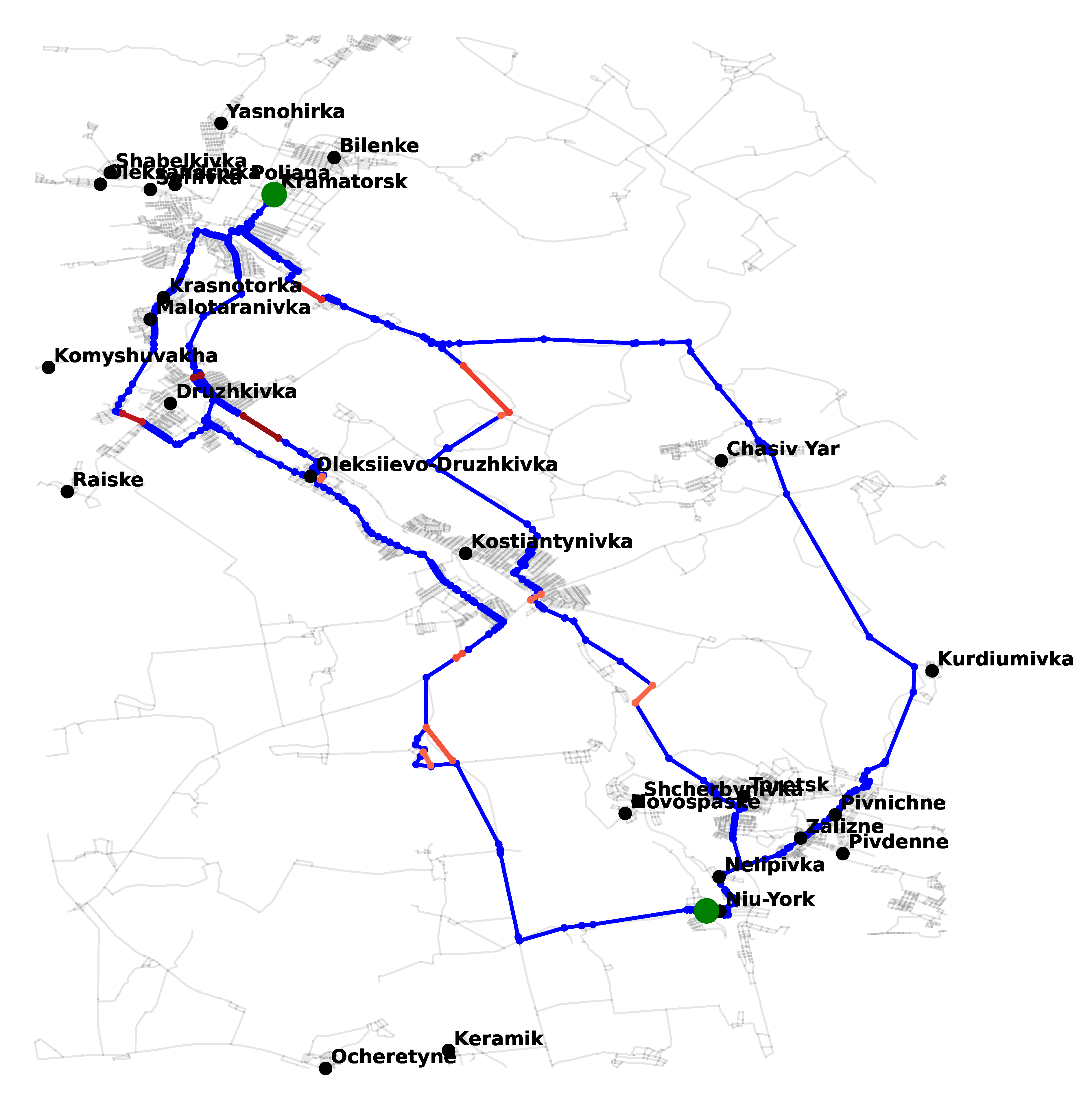}
        \caption{Red budget 2}
    \end{subfigure}
    \hfill
    \begin{subfigure}[b]{0.32\textwidth}
        \includegraphics[width=\linewidth]{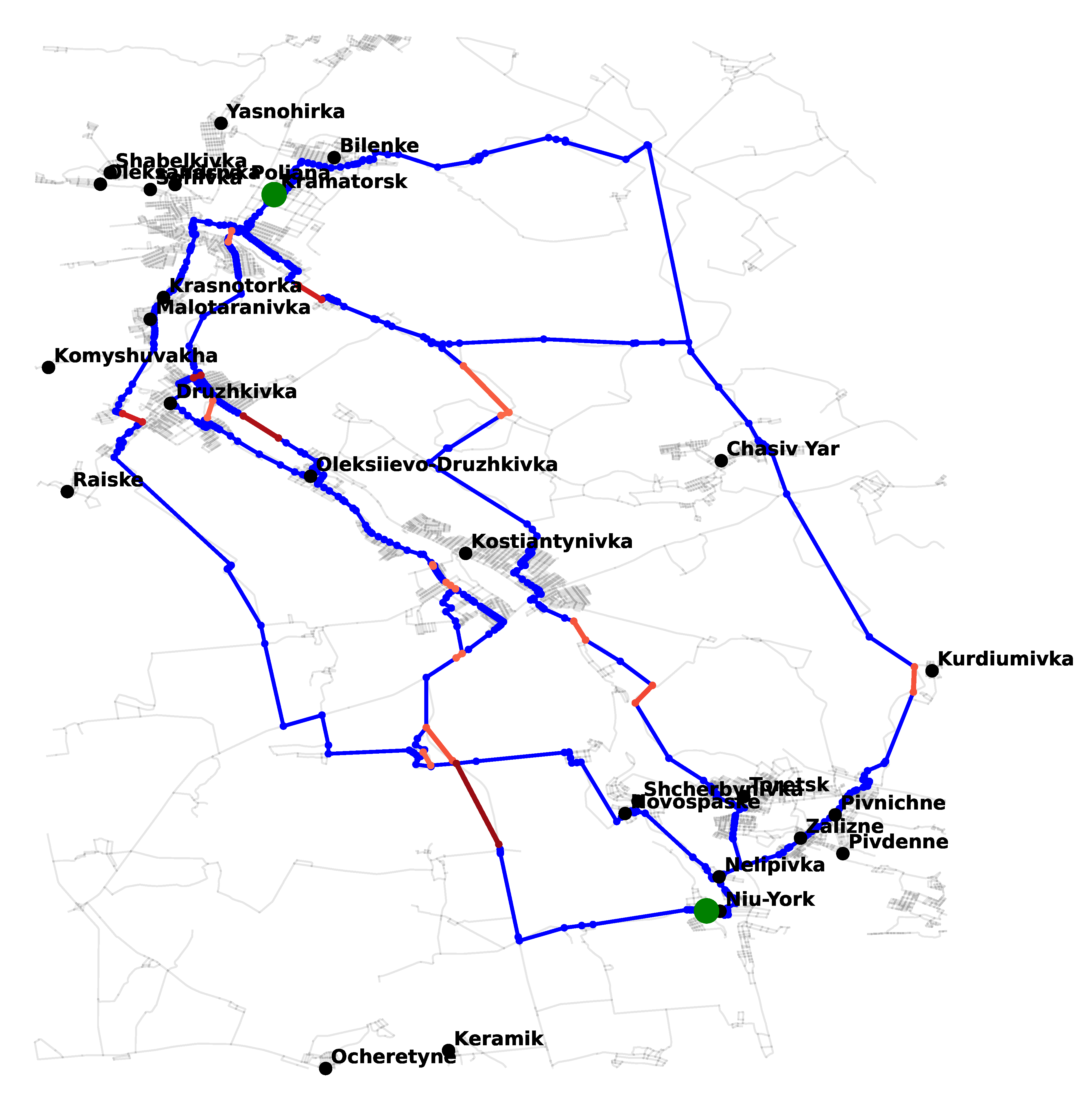}
        \caption{Red budget 3}
    \end{subfigure}

    \begin{subfigure}[b]{0.32\textwidth}
        \includegraphics[width=\linewidth]{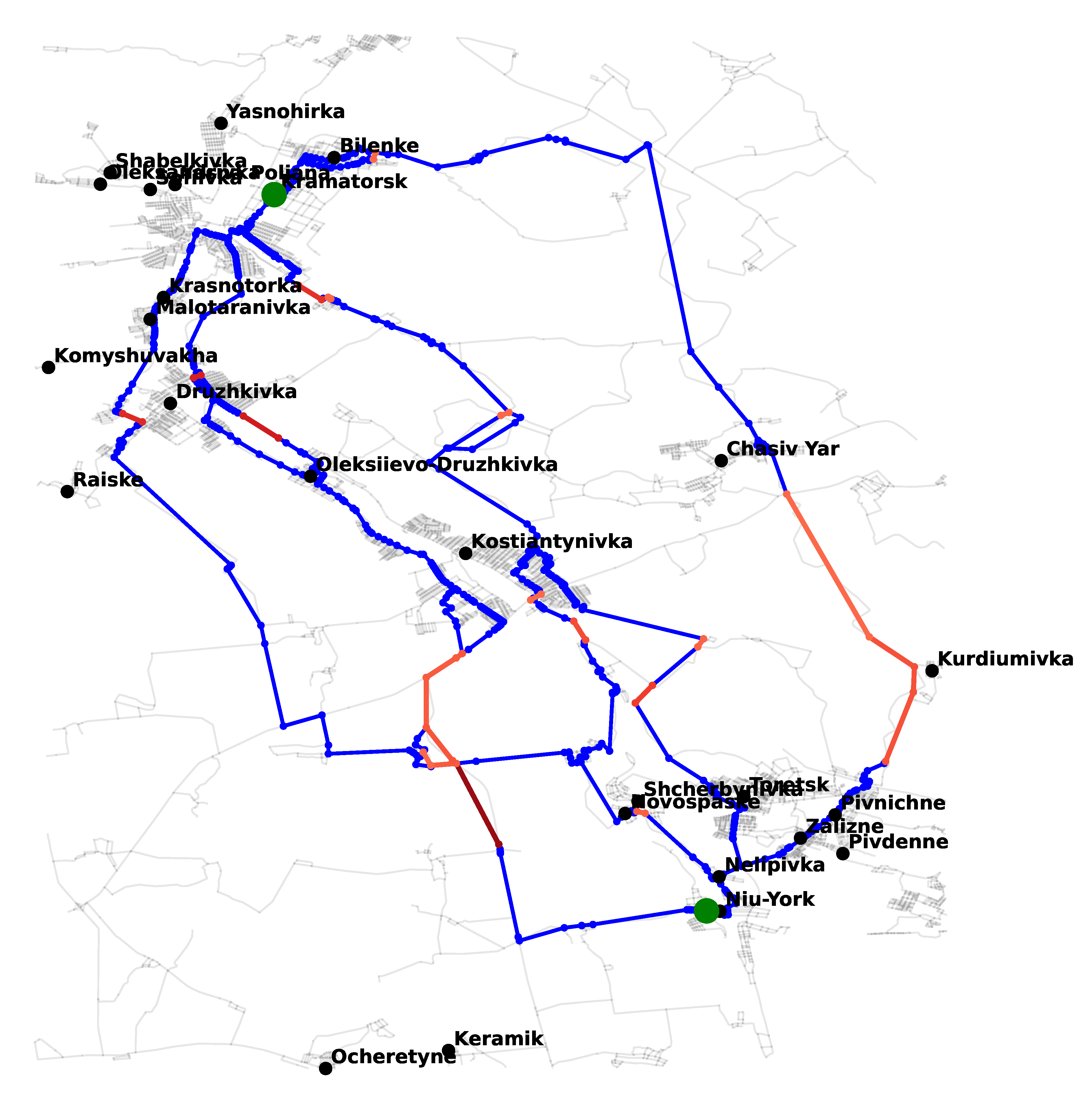}
        \caption{Red budget 4}
    \end{subfigure}
    \hfill
    \begin{subfigure}[b]{0.32\textwidth}
        \includegraphics[width=\linewidth]{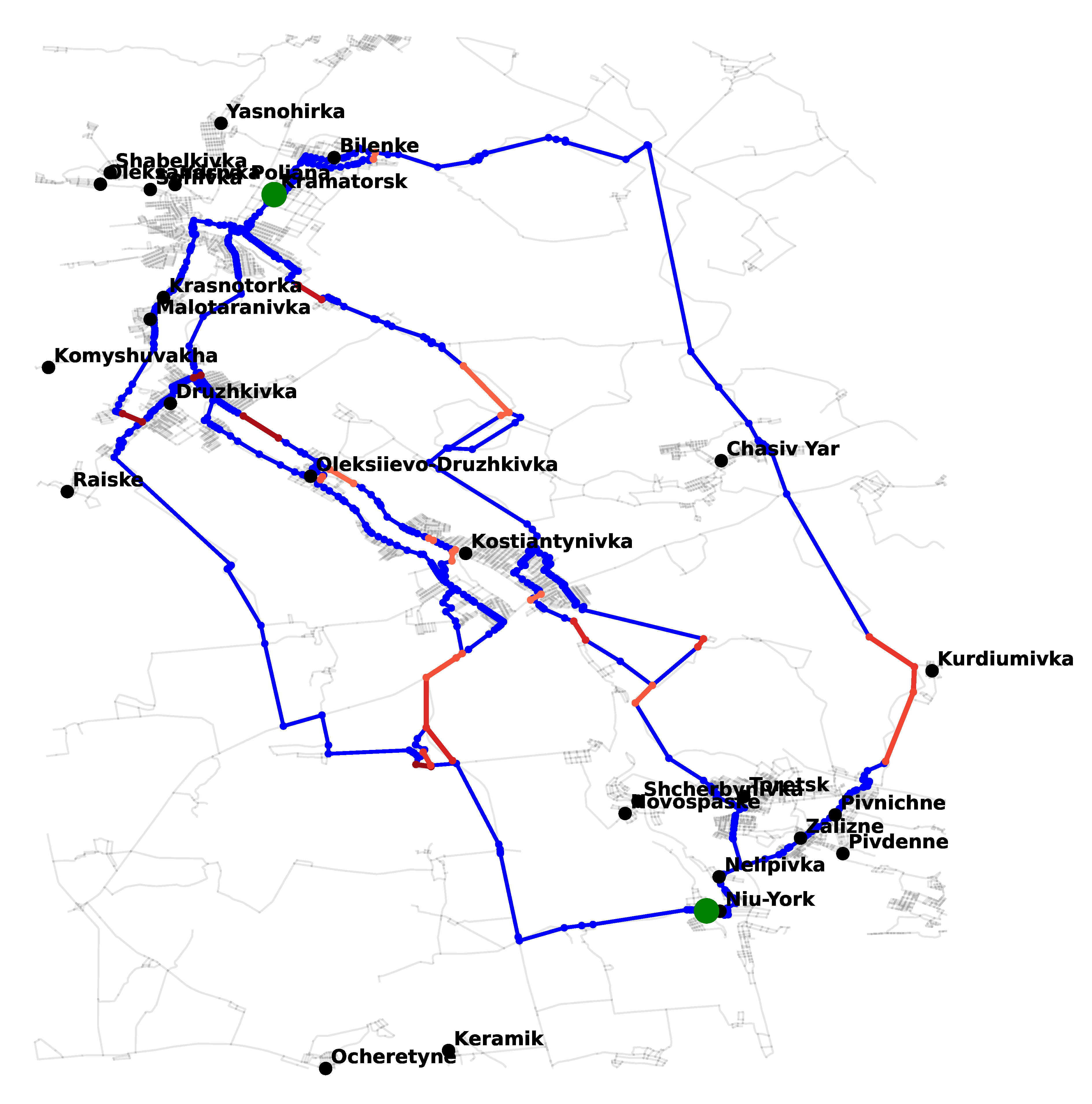}
        \caption{Red budget 5}
    \end{subfigure}
    \hfill
    \begin{subfigure}[b]{0.32\textwidth}
        \includegraphics[width=\linewidth]{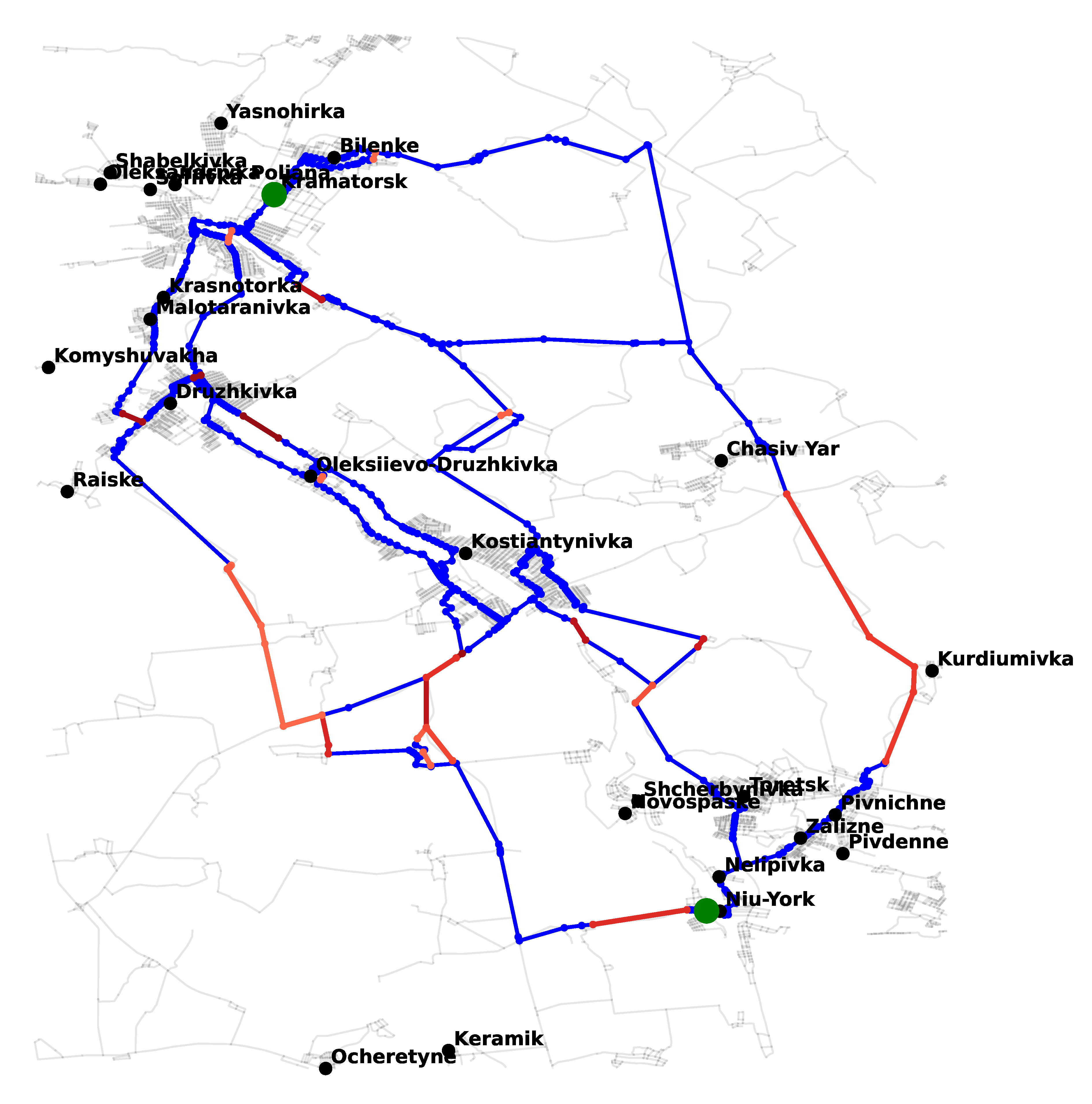}
        \caption{Red budget 6}
    \end{subfigure}
    \caption{Computed game-theoretic solutions in Scenario 3.}
    \label{fig:sc3-gt}
\end{figure}

\begin{table}[h!]
\centering
\caption{Throughput of solutions against different Red models in Scenario 3.}
\begin{tabular}{c||c|c|c|c|c|c}
Exp.$\backslash$True budget & ~~~~1~~~~ & ~~~~2~~~~ & ~~~~3~~~~ & ~~~~4~~~~ & ~~~~5~~~~ & ~~~~6~~~~ \\
\hline
1 & 0.88 & 0.81 & 0.78 & 0.63 & 0.51 & 0.42 \\
2 & 0.90 & 0.85 & 0.76 & 0.66 & 0.60 & 0.56 \\
3 & 0.88 & 0.81 & 0.89 & 0.76 & 0.73 & 0.69 \\
4 & 0.89 & 0.83 & 0.75 & 0.81 & 0.79 & 0.70 \\
5 & 0.90 & 0.80 & 0.75 & 0.66 & 0.83 & 0.80  \\
6 & 0.90 & 0.81 & 0.74 & 0.65 & 0.71 & 0.71
\end{tabular}
\label{tab:sc3-robust}
\end{table}

In comparison, the fastest deterministic route consists of 155 edges, of which 4 are high p-k. With a budget of 6, Red can carry out 2 low p-k and 3 high p-k attacks on this path, reducing throughput to 8\%. The predictable Red-aware path is subject to 4 low p-k and 1 high p-k attack, resulting in a throughput of 20.48\%. These deterministic paths are shown on the right in Figure~\ref{fig:sc3-fp}. The throughput performance of the game-theoretic solutions under Red models with different budgets is summarized in Table~\ref{tab:sc3-robust}. Against the correct Red model, throughput remains above 70\%. However, in this scenario, we observe a decrease in robustness, highlighting the importance of accurate Red modeling. In particular, with higher expected and actual budgets, throughput varies more substantially, though it still significantly outperforms the deterministic alternatives.

\section{Conclusion}

In this paper, we introduced a contested route planning framework that models adversarial interdiction in routing scenarios. The problem is formulated as a two-player zero-sum game on a graph, with optimal route strategies identified via randomized Nash equilibria. We demonstrated the computational complexity of solving for these equilibria and presented a practical double-oracle algorithm combining best-response oracles for both players. Our experiments on realistic urban networks confirm that the approach scales well to problems of practical size. These models, developed in cooperation with the U.S. Office of Naval Research, are intended for practical use in contested environments. The experimental results underscore the importance of explicitly modeling adversarial behavior rather than relying on deterministic routes.

\begin{credits}
\subsubsection{\ackname} 
This research was supported by the Office of Naval Research awards N00014-22-1-2530 and N00014-23-1-2374, and the National Science Foundation awards IIS-2147361 and IIS-2238960. We would like to thank Sven Bedn\'{a}\v{r} for his assistance with formalizing the Ukraine scenario.
\end{credits}
%
%
%
\bibliographystyle{splncs04}
\bibliography{library}
\end{document}